\documentclass[a4paper,onecolumn,11pt,unpublished]{quantumarticle}
\pdfoutput=1

\usepackage{amsmath,amssymb,amsthm,mathtools,bm,physics}
\usepackage{graphicx}
\usepackage{comment}
\usepackage[normalem]{ulem}

\usepackage[numbers,sort&compress]{natbib}

\usepackage{hyperref}

\newcommand{\Com}[2]{[#1,#2]}
\newcommand{\CA}{\mathbb{C}}

\newcommand{\cWY}{c_{\mathrm{WY}}^{\mathrm{opt}}}
\newcommand{\crho}{c_{\rho}^{\mathrm{opt}}}
\newcommand{\lM}{\lambda_{\rm max}}
\newcommand{\lm}{\lambda_{\rm min}}

\newcommand{\id}{I}
\newcommand{\Hil}{\mathcal{H}}

\newcommand{\I}{\mathbb{I}}
\newcommand{\rnorm}[1]{\lvert #1\rvert}

\theoremstyle{plain}
\newtheorem{theorem}{Theorem}
\newtheorem{lemma}{Lemma}
\newtheorem{proposition}{Proposition}

\newtheorem{remark}{Remark}

\begin{document}
\title{Optimal Uncertainty Relations for a Single Observable}

\author{Haruki Yamashita}%
\orcid{0009-0002-2439-0317}
\email{haruki.yamashita1367@gmail.com}
\affiliation{
Graduate School of Engineering and Science, Shibaura Institute of Technology,
307~Fukasaku, Minuma-ku, Saitama 337-0003, Japan
}

\author{Aina Mayumi}%
\orcid{0009-0007-0956-1620}
\affiliation{
Graduate School of Engineering and Science, Shibaura Institute of Technology,
307~Fukasaku, Minuma-ku, Saitama 337-0003, Japan
}

\author{Gen Kimura}%
\orcid{0000-0003-4288-2024}
\email{gen.kimura.quant@gmail.com}
\affiliation{
Graduate School of Information Sciences, Tohoku University, Sendai 980-8579, Japan
}

\date{}

\begin{abstract}
Uncertainty relations are conventionally formulated as trade-off relations between two or more observables.
Here we introduce a complementary viewpoint: even a single observable can possess unavoidable uncertainty originating from its noncommutativity with the quantum state.
We formulate such bounds as single-observable preparation uncertainty relations and show that several familiar quantities, including the Wigner--Yanase and Wigner--Yanase--Dyson skew informations and the quantum Fisher information, naturally fit into this framework.
We also introduce the power-commutator family $\|[\rho^s,A]\|^2$, $s \in [1/2,\infty)$, which directly quantifies state--observable noncommutativity.
The first main aim of this work is to determine the strongest fixed-state refinements of these single-observable uncertainty relations.
For all these quantities, we derive the optimal state-dependent coefficients for which the corresponding bounds remain valid for every observable.
Remarkably, in all cases the optimal coefficient depends only on the smallest and largest eigenvalues of the state.
Our second main aim is to strengthen these optimal lower bounds further by adding the maximal classical contribution to the variance, while preserving the same optimal coefficient for the noncommutative contribution.
For qubit states, the resulting sharpened relations become exact identities, yielding an exact decomposition of the variance into classical and noncommutative contributions.
\end{abstract}

\maketitle
\section{Introduction}

The uncertainty principle is one of the fundamental features of quantum mechanics. In its standard preparation-uncertainty formulation, it constrains the simultaneous statistical spreads of two or more observables in a given quantum state. 
Starting from the position--momentum relation of Kennard~\cite{Kennard1927}, this viewpoint was extended by Robertson to arbitrary pairs of observables~\cite{Robertson1929} and subsequently sharpened by Schr\"odinger~\cite{Schrodinger1930}. 
Numerous further developments have produced
variance-based~\cite{
Luo2,Park2005,Yichen,Maccone,Li2015,ChenFei2015SumUR,
SONG20162925,Xiao2016,ZhengZhang2017,Fan2020,Zhang2021,
math4010008,MKOC_2024,KMOLC,KMY},
entropic~\cite{
IBB1,Deutsch,Kraus_1987,Maassen,Rastegin,Berta_2010,HallRenyi},
information-theoretic~\cite{
Luo2003WignerYanaseSkewInformation,LZ2004,Yanagi_2005,
Kosaki_2005,Li_2009,FURUICHI2009179,Yanagi_2010,
Gibilisco_2007,GibiliscoHiaiPetz2009,CG2022,TGFF2022},
and coherence-based uncertainty relations~\cite{
SPB2016,Liu2016,PhysRevA.96.032313,Rastegin2017,
TajimaNagaoka2019CoherenceVariance}.
For broader reviews of uncertainty relations and their developments,
see Refs.~\cite{BUSCH2007155,Wehner_2010,RevModPhys.89.015002,HilgevoordUffink2024}.

These formulations predominantly express uncertainty through a trade-off between different observables. This raises a more elementary question: \emph{is there a fundamental lower bound of quantum origin on the uncertainty of a single observable?}

A simple observation suggests that such a bound can indeed arise from the noncommutativity between the observable and the quantum state. 
Let $A$ be an observable and $\rho$ a quantum state.
Denoting the expectation value by $a=\Tr(\rho A)$, the variance can be expressed as
\begin{equation}
V_\rho(A)=\Tr\!\left[\rho(A-aI)^2\right]=\|(A-aI)\sqrt{\rho}\|^2.
\end{equation}
Here and throughout this paper, $\|\cdot\|$ denotes the Hilbert--Schmidt norm, $\|X\|:=\sqrt{\Tr(X^\dagger X)}$.
By the definiteness of the Hilbert--Schmidt norm, $V_\rho(A)=0$ implies $(A-aI)\sqrt{\rho}=0$, and hence $A\rho=a\rho$, and therefore $\rho$ and $A$ commute. 
Consequently, whenever the state and the observable do not commute, the uncertainty of that observable cannot vanish.
This simple fact suggests that state--observable noncommutativity can give rise to a quantum-mechanical lower bound on the uncertainty of a single observable.

Quantitative manifestations of this idea have appeared in various forms through quantities such as skew information and quantum Fisher information.
Following Luo~\cite{LuoQuantumClassical}, for a state $\rho$ and an observable $A$, a nonnegative quantity $Q_\rho(A)$ may be regarded as a \emph{quantum uncertainty} if it satisfies: (i) convexity in $\rho$, (ii) $Q_\rho(A)=V_\rho(A)$ for every pure state $\rho$, and (iii) $Q_\rho(A)=0$ whenever $[A,\rho]=0$.
Then, for any pure-state decomposition $\rho=\sum_k p_k|\psi_k\rangle\langle\psi_k|$, conditions (i), (ii) and concavity of $V_\rho(A)$ give
\begin{equation}
Q_\rho(A)\le\sum_k p_k Q_{\psi_k}(A)=\sum_k p_k V_{\psi_k}(A)\le V_\rho(A).
\end{equation}
Thus one obtains a quantitative uncertainty relation for a single observable,
\begin{equation}
V_\rho(A)\ge Q_\rho(A).
\label{eq:general-single}
\end{equation}
Condition (iii) further shows that the lower bound $Q_\rho(A)$ is associated with the noncommutativity between the observable and the state.

Typical examples of quantum uncertainty are provided by the Wigner--Yanase skew information, the Wigner--Yanase--Dyson skew information, and the quantum Fisher information.
The Wigner--Yanase (WY) skew information~\cite{wigner1963} is defined by
\begin{equation}
I_\rho(A):=-\frac12\Tr[\sqrt{\rho},A]^2=\frac12\|[\sqrt{\rho},A]\|^2.\label{eq:WY}
\end{equation}
A natural extension is the Wigner--Yanase--Dyson (WYD) skew information,
\begin{equation}
I_{\rho,\alpha}(A):=-\frac12\Tr\!\left([\rho^\alpha,A][\rho^{1-\alpha},A]\right),\qquad 0<\alpha<1.\label{eq:WYD}
\end{equation}
Its convexity in $\rho$ follows from Lieb's celebrated concavity theorem~\cite{Lieb1973ConvexTraceFunctions}.
Another important example is provided by one quarter of the symmetric-logarithmic-derivative quantum Fisher information (QFI)~\cite{BraunsteinCaves1994},
\begin{equation}
\tilde{F}_{\rho}(A):= \frac{1}{4}F_{Q}[\rho,A]
:=
\frac{1}{2}\sum_{\substack{i,j\\ \lambda_i+\lambda_j>0}}
\frac{(\lambda_i-\lambda_j)^2}{\lambda_i+\lambda_j}
|\langle i|A|j\rangle|^2,
\label{eq:QFI-def}
\end{equation}
where $\rho=\sum_i\lambda_i|i\rangle\langle i|$ is a spectral decomposition of $\rho$.
All of these quantities are special cases of Hansen's metric-adjusted skew information~\cite{Hansen2008MetricAdjustedSkewInformation}.

Outside Hansen's metric-adjusted class, one may consider a particularly simple quantity that captures the noncommutativity between the state and the observable more directly:
\begin{equation}
K_{\rho}(A):=\frac{1}{2}\|[A,\rho]\|^2.\label{eq:K}
\end{equation}
A natural one-parameter extension is
\begin{equation}
K_{\rho,s}(A):=\frac{1}{2}\|[A,\rho^s]\|^2,\qquad \frac{1}{2}\le s.\label{eq:Ks}
\end{equation}
The endpoint $s=1/2$ coincides with the Wigner--Yanase skew information, whereas $s=1$ gives the direct commutator expression above.
For $1/2\le s\le1$, the quantity $K_{\rho,s}(A)$ qualifies as a quantum uncertainty in the sense introduced above.
In particular, its convexity in $\rho$ for $1/2<s<1$ is highly nontrivial and is established in Ref.~\cite{gen2026}, while the case $s=1$ follows directly from the linearity of $[A,\rho]$ in $\rho$.
For $s>1$, by contrast, $K_{\rho,s}(A)$ is not convex in general and therefore does not define a quantum uncertainty in this sense.
Nevertheless, we define $K_{\rho,s}(A)$ for the entire range $s\ge1/2$, because, as shown below, the corresponding single-observable uncertainty relation remains valid throughout this wider range.

Notably, it is known that, among all quantities $Q_\rho(A)$ that are convex in $\rho$ and coincide with the variance for pure states, $\frac14F_Q[\rho,A]$ is the largest~\cite{TothPetz2013Extremal,Yu2013QFIConvexRoof}.
Thus, within this class, it provides the strongest universal single-observable lower bound.

\paragraph{Main results and contributions.}

The main problem addressed in this work is the fixed-state optimization of single-observable uncertainty relations.
Specifically, for a given quantum uncertainty $Q_\rho(A)$, we first consider a state-dependent refinement of Eq.~\eqref{eq:general-single} of the form
\begin{equation}
V_\rho(A)\ge c(\rho)Q_\rho(A),\qquad c(\rho)\ge1,
\label{eq:GeneralUR}
\end{equation}
and, for each fixed (nonmaximally mixed) state $\rho$, seek the largest coefficient $c_{\rm opt}(\rho)$ for which this inequality holds for every observable $A$.

For the families considered above, we solve this optimization problem explicitly for any finite-dimensional quantum system on $\CA^d$. 
We determine the fixed-state optimal coefficients for the Wigner--Yanase skew information $I_\rho(A)$, the full Wigner--Yanase--Dyson family $I_{\rho,\alpha}(A)$, the quantum Fisher information $\tilde{F}_{\rho}(A)$, and the power-commutator family $K_{\rho,s}(A)$, which includes $K_\rho(A)$ as the case $s=1$.

Let $\lambda_{\min}$ and $\lambda_{\max}$ denote the smallest and largest eigenvalues of $\rho$, respectively.
Notice that, when $\rho=\I/d$ is maximally mixed, i.e., $\lambda_{\min}=\lambda_{\max}$, all the corresponding quantum uncertainties vanish identically for every observable.
Accordingly, the lower bounds in the relations displayed below are to be understood as zero, despite the apparent singularities in the coefficients.
For the WY skew information \eqref{eq:WY} and the direct state--observable commutator \eqref{eq:K}, we obtain
\begin{align}
V_\rho(A)
&\ge
\frac{\lambda_{\max}+\lambda_{\min}}
{\bigl(\sqrt{\lambda_{\max}}-\sqrt{\lambda_{\min}}\bigr)^2}I_\rho(A),
\label{eq:WY-opt-intro}
\end{align}
and
\begin{align}
V_\rho(A)
&\ge
\frac{\lambda_{\max}+\lambda_{\min}}
{(\lambda_{\max}-\lambda_{\min})^2}
K_\rho(A).
\label{eq:rho-opt-intro}
\end{align}
More generally, for the WYD family \eqref{eq:WYD}, the optimal relation is
\begin{equation}
V_\rho(A)
\ge
\frac{\lambda_{\max}+\lambda_{\min}}
{(\lambda_{\max}^{\alpha}-\lambda_{\min}^{\alpha})
(\lambda_{\max}^{1-\alpha}-\lambda_{\min}^{1-\alpha})}
I_{\rho,\alpha}(A), \qquad 0<\alpha<1
\label{eq:WYD-sharp-intro}
\end{equation}
while for the power-commutator family \eqref{eq:Ks} we obtain
\begin{equation}
V_\rho(A)
\ge
\frac{\lambda_{\max}+\lambda_{\min}}
{(\lambda_{\max}^s-\lambda_{\min}^s)^2}
K_{\rho,s}(A),
\qquad \frac12\le s
\end{equation}
The cases $\alpha=1/2$ and $s=1/2$ both reproduce relation \eqref{eq:WY-opt-intro}, while $s=1$ reproduces relation \eqref{eq:rho-opt-intro}.
For the quantum Fisher information \eqref{eq:QFI-def}, we obtain
\begin{equation}
V_\rho(A)
\ge
\left(
\frac{\lambda_{\max}+\lambda_{\min}}
{\lambda_{\max}-\lambda_{\min}}
\right)^2
\tilde{F}_\rho(A).
\label{eq:QFI-sharp-intro}
\end{equation}
This result is particularly notable because $\tilde{F}_\rho(A)$ already provides the strongest universal single-observable lower bound within the convex class described above, yet it admits a strict state-dependent improvement for every full-rank nonmaximally mixed state.

In this paper, we further show that these optimal multiplicative bounds can be strengthened by incorporating a maximal classical contribution to the variance.
To this end, we introduce in Sec.~\ref{sec:main} a classical uncertainty $V_\rho^{\rm cl}(A)$, defined in Eq.~\eqref{eq:Vcl}, which captures the maximal contribution to the variance compatible with the state.
We then establish the general uncertainty relation
\begin{equation}
V_\rho(A)\ge V_\rho^{\rm cl}(A)+c_{\rm opt}(\rho)Q_\rho(A),
\label{eq:GeneralUR2}
\end{equation}
for each of the WY, WYD, QFI, and power-commutator families considered above. 
Importantly, the coefficient $c_{\rm opt}(\rho)$ remains optimal even after the classical contribution is included.
Thus, the optimal quantum lower bound can be supplemented by the maximal classical contribution without any loss in the strength of its noncommutative part.
We refer to relations of these forms as
\emph{single-observable preparation uncertainty relations}.

\begin{remark} 
Several remarks are in order.
First, Eq.~\eqref{eq:GeneralUR2} provides a general uncertainty relation for an arbitrary state $\rho$ and any single observable $A$.
In particular, when $\rho$ is maximally mixed, the quantum-uncertainty term vanishes, while the classical contribution coincides with the total variance, $V_\rho^{\rm cl}(A)=V_\rho(A)$.
Thus, Eq.~\eqref{eq:GeneralUR2} is saturated in this case.

Second, for every faithful nonmaximally mixed state, all of the optimal coefficients above are strictly greater than $1$, yielding genuine state-dependent improvements over the original bounds.
For a non-faithful state, i.e., $\lambda_{\min}=0$, the optimal coefficients for the WY, WYD, and QFI bounds reduce to $1$, so these relations coincide with the original bound \eqref{eq:general-single}.
By contrast, for the power-commutator family one has
\begin{equation}
c_s^{\rm opt}(\rho)=\lambda_{\max}^{1-2s},
\end{equation}
which is strictly greater than $1$ for every non-pure state when $s>1/2$, since then $0<\lambda_{\max}<1$.
For a pure state, $\lambda_{\max}=1$, and hence $c_s^{\rm opt}(\rho)=1$.

Finally, and most importantly, in every case the displayed coefficient is optimal for the fixed state $\rho$: it is the largest possible coefficient for which the corresponding inequality remains valid for every observable $A$.
Thus, these relations are not merely state-dependent refinements of the conventional bounds, but the strongest possible bounds of the form \eqref{eq:GeneralUR} for the given state.
Consequently, the fact that the optimal coefficient depends only on the two extremal eigenvalues $\lambda_{\min}$ and $\lambda_{\max}$ should not be regarded as an artifact of our derivation, but rather as an intrinsic feature of these uncertainty relations.
Interestingly, the same dependence on the smallest and largest eigenvalues also appears in the optimization of Robertson-type uncertainty relations~\cite{KMY}.
\end{remark}

We next turn to a complementary question concerning equality: when do these sharpened relations become exact identities for every observable?
We show that, within the class of full-rank nonmaximally mixed states, these sharpened relations become exact identities for every observable if and only if the state has exactly two distinct eigenvalues, for all quantum uncertainties considered above.
Thus, universal exactness is completely characterized by the number of distinct eigenvalues of the state.

The origin of this exactness can be understood directly from the spectral structure of the state.
If $\rho$ has exactly two distinct eigenvalues, every noncommuting contribution of an observable connects the same pair of eigenspaces.
Consequently, both the variance beyond its classical part and the noncommutative quantities appearing in the bounds are determined by the same off-diagonal block.
By contrast, if three or more distinct eigenvalues are present, one can choose an observable supported on a pair of eigenspaces other than the pair corresponding to $\lambda_{\min}$ and $\lambda_{\max}$, for which the scalar inequality underlying the optimal bound is strict.
Thus, the presence of a third distinct eigenvalue prevents universal saturation.

Qubit systems provide the simplest realization of this structure.
Every full-rank nonmaximally mixed qubit state has exactly two distinct eigenvalues, and hence all of the sharpened relations above become exact identities for every qubit observable.
The pure and maximally mixed boundary cases can be treated separately, and the identities therefore extend to every qubit state.
For qubits, the variance of a single observable thus admits an exact decomposition into a classical contribution and a noncommutative contribution.
The Bloch-sphere representation further provides a simple geometric picture of the unsharpened relations and their equality conditions.

Finally, we return to the conventional setting of uncertainty relations involving several observables.
Since the single-observable relations derived here can be applied separately to each observable, they can be combined to produce conventional product-form uncertainty relations.
Because these bounds do not explicitly involve commutators between distinct observables, one might expect the resulting relations to be weaker than conventional trade-off bounds.
The qubit case, however, shows that this expectation need not hold.

For qubits, the sharpened single-observable relations are exact, and their products therefore yield exact identities for products of variances.
Even after the classical contributions are discarded, the resulting optimized noncommutative bounds remain unexpectedly strong.
In particular, for every pair of qubit spin observables, the resulting product bound is no weaker than the Robertson bound.
Moreover, for mixed qubit states with $1/2<P<1$, where $P=\Tr\rho^2$ is the purity, its uniform average exceeds the corresponding averaged Schr\"odinger and Luo-type bounds.
These results show that a substantial part of the uncertainty conventionally formulated as a trade-off between different observables is already encoded in the constraints imposed independently on each observable by its relation to the state.

The paper is organized as follows.
In Sec.~2, we derive the optimal single-observable preparation uncertainty relations, introduce $V_\rho^{\rm cl}(A)$, and establish the sharpened power-commutator, Wigner--Yanase--Dyson, and quantum-Fisher-information relations.
In Sec.~3, we establish the necessary-and-sufficient characterization of universal exactness in terms of the number of distinct eigenvalues and then specialize the results to qubit systems.
In Sec.~4, we apply the single-observable relations to variance-product uncertainty relations and compare the resulting bounds for qubits.
Sec.~5 summarizes the results and discusses possible extensions.

\section{Single-observable preparation uncertainty relations}\label{sec:main}

In this section, we establish the sharpened single-observable uncertainty relations of the form \eqref{eq:GeneralUR2}.
We first introduce a notion of classical uncertainty, denoted by $V_\rho^{\rm cl}(A)$, which constitutes another conceptual ingredient of our approach.
The basic idea is to identify, within the total variance $V_\rho(A)$, the maximal contribution that can be attributed to the part of the observable compatible with the state $\rho$.
If $\rho$ has degenerate eigenvalues, its eigenbasis is not unique.
We therefore define the classical contribution by maximizing over all eigenbases of $\rho$, so that the resulting quantity does not depend on an arbitrary choice of basis.
We then show that this optimized quantity has a simple basis-independent representation in terms of the pinching map associated with $\rho$.
This representation naturally separates the variance into a classical contribution and a remaining contribution arising from the noncommutativity between $A$ and $\rho$.
Using this decomposition, we establish the optimal sharpened relations for the power-commutator family $K_{\rho,s}(A)$ and the Wigner--Yanase--Dyson skew information $I_{\rho,\alpha}(A)$, followed by the corresponding result for the quantum Fisher information.
The Wigner--Yanase and direct commutator relations \eqref{eq:WY-opt-intro} and \eqref{eq:rho-opt-intro} are included as the special cases $s=1/2$ and $s=1$, respectively.

\subsection{Classical uncertainty and variance decomposition}

Let $\{|\phi_i\rangle\}_{i=1}^d$ be an orthonormal eigenbasis of $\rho$.
We define the part of an observable $A$ that is diagonal in this basis by
\begin{equation}\label{eq:clpartA}
A^{\rm cl}_{\{\phi_i\}}
:=
\sum_{i=1}^d
\langle\phi_i|A|\phi_i\rangle
|\phi_i\rangle\langle\phi_i|.
\end{equation}
A classical system may be embedded into a quantum system by restricting states and observables to those diagonal in a common basis.
It is therefore natural to regard $A^{\rm cl}_{\{\phi_i\}}$ as the classical part of $A$ relative to the chosen eigenbasis of $\rho$.

If $\rho$ has degenerate eigenvalues, however, its eigenbasis is not unique.
We remove this ambiguity by maximizing over all eigenbases of $\rho$ and define the classical uncertainty by
\begin{equation}\label{eq:Vcl}
V_\rho^{\rm cl}(A)
:=
\sup_{\{|\phi_i\rangle\}\,:\,\text{eigenbasis of }\rho}
V_\rho\!\left(A^{\rm cl}_{\{\phi_i\}}\right).
\end{equation}
Thus, $V_\rho^{\rm cl}(A)$ represents the maximal classical contribution to the variance of $A$ compatible with the state $\rho$.

This optimization admits a simple basis-independent representation.
Let
\begin{equation}
\rho=\sum_\mu\lambda_\mu P_\mu
\end{equation}
be the spectral decomposition of $\rho$, where the $\lambda_\mu$ are the distinct eigenvalues.
We denote by
\begin{equation}
\mathcal P_\rho(A):=\sum_\mu P_\mu A P_\mu
\end{equation}
the block-diagonal part of $A$ with respect to the eigenspace decomposition of $\rho$.
This is the pinching of $A$ associated with $\rho$ \cite{Hayashi2006QI}.
\begin{proposition}\label{prop:classical-pinching}
For any state $\rho$ and for every observable $A$,
\begin{equation}
V_\rho^{\rm cl}(A)=V_\rho(\mathcal P_\rho(A)) = V_\rho \left(\sum_\mu P_\mu A P_\mu\right).
\end{equation}
\end{proposition}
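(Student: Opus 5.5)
Fix an arbitrary eigenbasis $\{|\phi_i\rangle\}$ of $\rho$, grouped so that each $|\phi_i\rangle$ lies in some eigenspace $P_\mu\Hil$, with eigenvalue $\lambda_{\mu(i)}$. The plan has two steps. First compute $V_\rho(A^{\rm cl}_{\{\phi_i\}})$ and $V_\rho(\mathcal P_\rho(A))$ explicitly and show that the former never exceeds the latter. Then show that equality is attained for a suitable eigenbasis, so the supremum in \eqref{eq:Vcl} is a maximum equal to $V_\rho(\mathcal P_\rho(A))$.

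\textbf{Comparing the two variances.} Since $\rho$ commutes with every $P_\mu$, we have $\Tr(\rho\,\mathcal P_\rho(A))=\sum_\mu\lambda_\mu\Tr(P_\mu A)=\Tr(\rho A)=a$. The same mean is obtained for $A^{\rm cl}_{\{\phi_i\}}$, since $\Tr(\rho A^{\rm cl})=\sum_i\lambda_{\mu(i)}\langle\phi_i|A|\phi_i\rangle=a$. It therefore suffices to compare second moments. For the classical part,
\[
\Tr\bigl(\rho (A^{\rm cl})^2\bigr)=\sum_\mu\lambda_\mu\sum_{i\in\mu}\langle\phi_i|A|\phi_i\rangle^2 .
\]
For the pinching, $\mathcal P_\rho(A)^2=\sum_\mu (P_\mu AP_\mu)^2$, so
\[
\Tr\bigl(\rho\,\mathcal P_\rho(A)^2\bigr)=\sum_\mu\lambda_\mu\Tr\bigl[(P_\mu A P_\mu)^2\bigr]=\sum_\mu\lambda_\mu\sum_{i,j\in\mu}|\langle\phi_i|A|\phi_j\rangle|^2 .
\]
Dropping the off-diagonal terms $i\neq j$ within each block gives $V_\rho(A^{\rm cl}_{\{\phi_i\}})\le V_\rho(\mathcal P_\rho(A))$ for every eigenbasis. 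This is the Schur-type inequality $\sum_i|B_{ii}|^2\le\|B\|^2$ applied to each block $B=P_\mu AP_\mu$, weighted by $\lambda_\mu\ge 0$.

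\textbf{Attaining the bound.} Each $P_\mu AP_\mu$ is Hermitian on the subspace $P_\mu\Hil$, so it can be diagonalized by an orthonormal basis of that subspace. Taking the union over $\mu$ of these bases gives an eigenbasis of $\rho$ in which every block is diagonal. In this basis the dropped off-diagonal terms vanish, and indeed $A^{\rm cl}_{\{\phi_i\}}=\mathcal P_\rho(A)$ exactly. Hence the supremum is attained and equals $V_\rho(\mathcal P_\rho(A))$.

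The only subtle point is the bookkeeping of zero eigenvalues. If $\lambda_\mu=0$, the corresponding block contributes nothing to either second moment, so these blocks cause no difficulty. The remaining steps are routine. The main step is recognizing that the freedom in choosing an eigenbasis is exactly unitary freedom inside each eigenspace, and that this freedom can be used to diagonalize the compressed blocks $P_\mu AP_\mu$.
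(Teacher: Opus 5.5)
Your proof is correct and follows essentially the same route as the paper. You match the means, bound the diagonal sum $\sum_k\lvert\langle\phi_{\mu k}|A_\mu|\phi_{\mu k}\rangle\rvert^2$ by $\Tr(A_\mu^2)$ in each block weighted by $\lambda_\mu$, and attain the bound by choosing an eigenbasis that diagonalizes each $P_\mu A P_\mu$, so that $A^{\rm cl}_{\{\phi_i\}}=\mathcal P_\rho(A)$.
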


\begin{proof}
Fix an arbitrary eigenbasis $\{|\phi_{\mu k}\rangle\}_{\mu,k}$ of $\rho$, where $|\phi_{\mu k}\rangle\in\operatorname{Ran}P_\mu$, and set $A_\mu:=P_\mu A P_\mu$. 
Since $A_\mu$ is self-adjoint, 
\begin{equation}
\sum_k|\langle\phi_{\mu k}|A_\mu|\phi_{\mu k}\rangle|^2\le \sum_{k,l} |\langle\phi_{\mu k}|A_\mu|\phi_{\mu l}\rangle|^2 = \Tr(A_\mu^2).
\end{equation}
Moreover, since $A^{\rm cl}_{\{\phi_{\mu k}\}}$ has the same diagonal matrix elements as $A$ in an eigenbasis of $\rho$,
\begin{equation}
\Tr\left(\rho A^{\rm cl}_{\{\phi_{\mu k}\}}\right)=\Tr(\rho A)=\Tr\left(\rho\mathcal P_\rho(A)\right).
\end{equation}
For the second moments, using $\rho P_\mu=\lambda_\mu P_\mu$, we have
\begin{align}
\Tr\left(\rho\left(A^{\rm cl}_{\{\phi_{\mu k}\}}\right)^2\right)=\sum_\mu\lambda_\mu\sum_k|\langle\phi_{\mu k}|A_\mu|\phi_{\mu k}\rangle|^2 \le\sum_\mu\lambda_\mu\Tr(A_\mu^2) =\Tr\left(\rho\mathcal P_\rho(A)^2\right).
\end{align}
Since the two observables have the same expectation value with respect to $\rho$, subtracting the common squared mean yields
\begin{equation}
V_\rho\left(A^{\rm cl}_{\{\phi_{\mu k}\}}\right)\le V_\rho\left(\mathcal P_\rho(A)\right).
\end{equation}
Taking the supremum over all eigenbases gives
\begin{equation}
V_\rho^{\rm cl}(A)\le V_\rho\!\left(\mathcal P_\rho(A)\right).
\end{equation}
Conversely, within each eigenspace $\operatorname{Ran}P_\mu$, choose an orthonormal basis diagonalizing $A_\mu$.
For the resulting eigenbasis of $\rho$, one has
\begin{equation}
A^{\rm cl}_{\{\phi_{\mu k}\}}=\sum_\mu A_\mu=\mathcal P_\rho(A),
\end{equation}
so the upper bound is attained.
\end{proof}

Proposition~\ref{prop:classical-pinching} shows that the classical uncertainty defined through the optimization in Eq.~\eqref{eq:Vcl} admits a basis-independent characterization.
For the proofs of the uncertainty relations below, however, it is useful to return to the original definition and fix an arbitrary eigenbasis of $\rho$.
Let
\begin{equation}
\rho=\sum_i\lambda_i|\phi_i\rangle\langle\phi_i|
\end{equation}
be such a spectral decomposition, and set $A_{ij}:=\langle\phi_i|A|\phi_j\rangle$ and $\mu:=\Tr(\rho A)$.
Then
\begin{align}
V_\rho(A)
&=\Tr(\rho A^2)-\mu^2 \nonumber\\
&=\sum_i\lambda_i A_{ii}^2-\mu^2+\sum_{i<j}(\lambda_i+\lambda_j)|A_{ij}|^2 \nonumber\\
&=V_\rho\left(A^{\rm cl}_{\{\phi_i\}}\right)+\sum_{i<j}(\lambda_i+\lambda_j)|A_{ij}|^2.
\label{eq:variance-basis-decomp}
\end{align}
This decomposition holds for every eigenbasis of $\rho$ and will be the common starting point for the uncertainty relations derived below.

\begin{remark} The notion of classical uncertainty introduced above should be distinguished from several related quantum--classical decompositions of uncertainty in the literature. In particular, Luo defined a classical contribution to the variance as the part complementary to a chosen quantum uncertainty such as the Wigner--Yanase skew information~\cite{LuoQuantumClassical}. 
Our construction takes a different route: $V_\rho^{\rm cl}(A)$ is defined directly from the maximal part of the observable that is compatible with the state $\rho$, independently of any particular choice of quantum uncertainty $Q_\rho(A)$. 

Related ideas also appear in Hall's work on classical and nonclassical components of observables, where the classical component is associated with optimal estimation of one observable from another~\cite{HallExact}, and more recently in general quantum--classical decompositions of uncertainty and other resources~\cite{HallRenyi}. 
The present construction is instead tied specifically to the spectral structure of the prepared state: the compatible part of $A$ is given by the pinching $\mathcal P_\rho(A)$, and its variance yields $V_\rho^{\rm cl}(A)$. 
Thus, while these approaches are conceptually related, the quantity introduced here is defined directly through state--observable compatibility and is tailored to the single-observable uncertainty relations studied in this work. 
\end{remark}

\subsection{Power-commutator, Wigner--Yanase--Dyson, and Quantum-Fisher-information relations}

We first consider the power-commutator family $K_{\rho,s}(A)$ defined in Eq.~\eqref{eq:Ks}.
The following scalar inequality is the key ingredient; its proof is given in Appendix~\ref{app:lemma}.

\begin{lemma}\label{lem:sq}
Let $0\le m<M$ and $s\ge1/2$.
Then, for any $x,y\in[m,M]$,
\begin{equation}
x+y
\ge
\frac{M+m}{(M^s-m^s)^2}(x^s-y^s)^2.
\label{eq:scalar-ineq-s}
\end{equation}
\end{lemma}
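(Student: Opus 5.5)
We will prove \eqref{eq:scalar-ineq-s} by a monotonicity argument for the ratio of its two sides. If $x=y$ the right-hand side vanishes and there is nothing to prove. Both sides are symmetric in $(x,y)$, so we may assume $m\le y<x\le M$. In that region we define
\begin{equation}
g(x,y):=\frac{(x^s-y^s)^2}{x+y}.
\end{equation}
Then $g>0$, since $x>y\ge0$ forces $x>0$. The lemma is equivalent to $g(x,y)\le g(M,m)$, because $g(M,m)=(M^s-m^s)^2/(M+m)$. It therefore suffices to show two things on the region $\{y<x\}$: $g$ is nondecreasing in $x$, and $g$ is nonincreasing in $y$. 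The maximum over $m\le y<x\le M$ is then attained at $x=M$, $y=m$.

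\emph{Monotonicity in $x$.} We work with $\log g$, which is well defined since $g>0$. Differentiating,
\begin{equation}
\partial_x\log g=\frac{2s\,x^{s-1}}{x^s-y^s}-\frac{1}{x+y}.
\end{equation}
This is positive if and only if $2s\,x^{s-1}(x+y)>x^s-y^s$. The hypothesis $s\ge1/2$ is used here, and only here. Since $2s\ge1$,
\begin{equation}
2s\,x^{s-1}(x+y)=2s\,x^s+2s\,x^{s-1}y\ge x^s\ge x^s-y^s.
\end{equation}
The inequality is strict, because at least one of the terms $2s\,x^{s-1}y$ and $y^s$ is positive when $y>0$. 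When $y=0$ we have $g=x^{2s-1}$, which is nondecreasing in $x$ for $s\ge1/2$.

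\emph{Monotonicity in $y$.} For $0<y<x$,
\begin{equation}
\partial_y\log g=-\frac{2s\,y^{s-1}}{x^s-y^s}-\frac{1}{x+y}<0.
\end{equation}
Both terms are negative, so no condition on $s$ is needed. The endpoint $y=0$, where $y^{s-1}$ may blow up, is handled by continuity of $g$ in $y$ on $[0,x)$.

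\emph{Combining.} Given $m\le y<x\le M$, we first move $y$ down to $m$ and then move $x$ up to $M$; the constraint $y<x$ is preserved throughout. This gives
\begin{equation}
g(x,y)\le g(x,m)\le g(M,m).
\end{equation}
Rearranging yields \eqref{eq:scalar-ineq-s}.

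The only real step is the $x$-derivative: it is the one place where the threshold $s\ge1/2$ enters, and it explains why the lemma is stated for that range. The remaining care concerns the boundary cases $m=0$ and $y=0$, which continuity of $g$ handles.
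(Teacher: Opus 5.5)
Your proof is correct, but it organizes the monotonicity differently from the paper. The paper obtains Lemma~\ref{lem:sq} as the case $(p,q,\gamma)=(s,s,1)$ of Lemma~\ref{lem:scalar-ieq}. That proof uses homogeneity: with $t=x/y$ it writes $(x^s-y^s)^2/(x+y)=x^{2s-1}G(t)$, where $G(t)=\frac{t}{1+t}(1-t^{-s})^2$ is a product of positive increasing factors. So $s\ge1/2$ enters only as the sign of the exponent $2s-1$, and the bound follows from $x\le M$ and $t\le R=M/m$, with no differentiation of the full expression.

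The price of the paper's route is that it needs $m>0$ so that $t$ and $R$ are defined. The case $m=0$ is then handled by a separate limit $m'\to0^+$. You work directly in the coordinates $(x,y)$ and prove coordinatewise monotonicity of $\log g$ through partial derivatives. This handles $m=0$ and $y=0$ in a single pass, via continuity of $g$ on $[0,x)$, with no limiting argument.

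Your route also yields what the paper needs later. For $m>0$ both partial derivatives are strict: $\partial_y\log g<0$, and $\partial_x\log g>0$ because $y\ge m>0$. Hence equality with $x\neq y$ forces $(x,y)=(M,m)$, which is exactly the strictness used in Theorem~\ref{thm:universal-exactness}.

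Finally, your argument extends just as easily to the paper's general $(p,q,\gamma)$ setting. Using $0<x^p-y^p\le x^p$ gives $\partial_x\log g\ge (p+q)/x-\gamma/(x+y)\ge (p+q-\gamma)/x\ge0$, while $\partial_y\log g<0$ holds without any condition.
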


We can now state the optimal sharpened relation.

\begin{theorem}\label{thm:s-main}
For every density operator $\rho$, every observable $A$, and $s\ge1/2$,
\begin{equation}
V_\rho(A)
\ge
V_\rho^{\rm cl}(A)
+
\frac{\lM+\lm}{(\lM^s-\lm^s)^2}
K_{\rho,s}(A).
\label{eq:mainUR}
\end{equation}
When $\rho$ is maximally mixed, the second term on the right-hand side is understood to be zero.
For every nonmaximally mixed state $\rho$, the relation is tight.
In particular, the coefficient
\begin{equation}\label{eq:optCoef}
c_s^{\rm opt}(\rho)
=
\frac{\lM+\lm}{(\lM^s-\lm^s)^2}
\end{equation}
is optimal: it cannot be replaced by any larger coefficient while preserving Eq.~\eqref{eq:mainUR} for every observable $A$.
\end{theorem}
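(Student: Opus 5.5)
The plan is to combine the basis decomposition \eqref{eq:variance-basis-decomp} with Lemma~\ref{lem:sq}, applied entrywise. Fix an eigenbasis $\{|\phi_i\rangle\}$ of $\rho$ for which $A^{\rm cl}_{\{\phi_i\}}=\mathcal P_\rho(A)$, that is, one that diagonalizes each block $P_\mu AP_\mu$. By Proposition~\ref{prop:classical-pinching} this gives $V_\rho(A^{\rm cl}_{\{\phi_i\}})=V_\rho^{\rm cl}(A)$. Equation~\eqref{eq:variance-basis-decomp} then reads
\begin{equation*}
V_\rho(A)=V_\rho^{\rm cl}(A)+\sum_{i<j}(\lambda_i+\lambda_j)|A_{ij}|^2 .
\end{equation*}
In the same basis, $[A,\rho^s]_{ij}=(\lambda_j^s-\lambda_i^s)A_{ij}$. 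Hence
\begin{equation*}
K_{\rho,s}(A)=\tfrac12\sum_{i,j}(\lambda_i^s-\lambda_j^s)^2|A_{ij}|^2=\sum_{i<j}(\lambda_i^s-\lambda_j^s)^2|A_{ij}|^2 .
\end{equation*}
If $\rho$ is maximally mixed, this vanishes and the inequality is trivial, since $V_\rho(A)\ge V_\rho^{\rm cl}(A)$.

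Otherwise $\lm<\lM$. We apply Lemma~\ref{lem:sq} with $m=\lm$, $M=\lM$, $x=\lambda_i$, $y=\lambda_j$, which gives $\lambda_i+\lambda_j\ge c_s^{\rm opt}(\rho)(\lambda_i^s-\lambda_j^s)^2$ for each pair. We multiply by $|A_{ij}|^2$ and sum over $i<j$ to obtain \eqref{eq:mainUR}. Note that $m=\lm$ may be $0$, which the lemma allows, and that the terms with $\lambda_i=\lambda_j$ (indices inside the same eigenspace) contribute zero to $K_{\rho,s}(A)$ and a nonnegative amount on the left, so they do no harm. This part is routine once the lemma is granted.

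For tightness and optimality, take eigenvectors $|\phi_1\rangle$ with eigenvalue $\lM$ and $|\phi_2\rangle$ with eigenvalue $\lm$, and set $A=|\phi_1\rangle\langle\phi_2|+|\phi_2\rangle\langle\phi_1|$. This $A$ has zero diagonal in an eigenbasis containing $\phi_1,\phi_2$. Its pinching is $\mathcal P_\rho(A)=0$, because $\phi_1$ and $\phi_2$ lie in distinct eigenspaces. Therefore $V_\rho^{\rm cl}(A)=0$ and $\Tr\rho A=0$, so $V_\rho(A)=\lM+\lm$. On the other side, $K_{\rho,s}(A)=(\lM^s-\lm^s)^2>0$. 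Equality thus holds in \eqref{eq:mainUR}. Any coefficient $c>c_s^{\rm opt}(\rho)$ would give a right-hand side strictly exceeding $V_\rho(A)$, which proves optimality.

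The only real obstacle is Lemma~\ref{lem:sq}, which is deferred to the appendix and which we take as given. If it had to be proved, I would reduce it by symmetry to $m\le y\le x\le M$. I would then show that $g(x,y)=(x^s-y^s)^2/(x+y)$ is nondecreasing in $x$ and nonincreasing in $y$ on this region, so that its maximum over the square sits at $(M,m)$. Monotonicity in $x$ should follow from $s\ge1/2$ via the sign of $2s x^{s-1}(x+y)-(x^s-y^s)$. Monotonicity in $y$ is the delicate step, and I would check it through the analogous derivative condition.
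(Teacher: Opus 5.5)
Your proof is correct and follows the paper's: both apply Lemma~\ref{lem:sq} entrywise to the decomposition \eqref{eq:variance-basis-decomp} and use the same observable $A_0$ for tightness and optimality. The only cosmetic difference is that you fix the pinching-diagonalizing eigenbasis via Proposition~\ref{prop:classical-pinching}, whereas the paper works in an arbitrary eigenbasis and then takes the supremum in \eqref{eq:Vcl}. Your fallback sketch for the lemma would also work, and the step you flag as delicate is in fact immediate, since for $x\ge y>0$ one has $\partial_y g=-(x^s-y^s)\bigl[2sy^{s-1}(x+y)+x^s-y^s\bigr]/(x+y)^2\le0$ with no condition on $s$; the paper instead substitutes $t=x/y$ and uses $x^{2s-1}\le M^{2s-1}$.
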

\begin{remark} For $1/2\le s\le1$, the quantity $K_{\rho,s}(A)$ is convex in $\rho$ and hence qualifies as a quantum uncertainty in the sense discussed above. 
For $s>1$, however, $K_{\rho,s}(A)$ is not convex in general and therefore does not define a quantum uncertainty; see Ref.~\cite{gen2026}. Nevertheless, it is worth noting that the uncertainty relation~\eqref{eq:mainUR} itself remains valid for the entire range $s\ge1/2$. \end{remark}

\begin{proof}
Assume first that $\lm<\lM$.
Fix an arbitrary orthonormal eigenbasis $\{|\phi_i\rangle\}_{i=1}^d$ of $\rho$, write
\begin{equation}
\rho=\sum_i\lambda_i|\phi_i\rangle\langle\phi_i|,
\end{equation}
and set $A_{ij}:=\langle\phi_i|A|\phi_j\rangle$.
By the decomposition \eqref{eq:variance-basis-decomp},
\begin{equation}
V_\rho(A)-V_\rho\left(A^{\rm cl}_{\{\phi_i\}}\right)
=
\sum_{i<j}(\lambda_i+\lambda_j)|A_{ij}|^2.
\end{equation}
On the other hand, from the definition of $K_{\rho,s}(A)$,
\begin{equation}
K_{\rho,s}(A)
=
\sum_{i<j}
(\lambda_i^s-\lambda_j^s)^2|A_{ij}|^2.
\end{equation}
Applying Lemma~\ref{lem:sq} with $m=\lm$, $M=\lM$, $x=\lambda_i$, and $y=\lambda_j$, we obtain
\begin{equation}
\lambda_i+\lambda_j
\ge
\frac{\lM+\lm}{(\lM^s-\lm^s)^2}
(\lambda_i^s-\lambda_j^s)^2.
\end{equation}
Multiplying this inequality by $|A_{ij}|^2$ and summing over $i<j$ gives
\begin{equation}
V_\rho(A)
\ge
V_\rho\left(A^{\rm cl}_{\{\phi_i\}}\right)
+
\frac{\lM+\lm}{(\lM^s-\lm^s)^2}
K_{\rho,s}(A).
\end{equation}
Since the eigenbasis $\{|\phi_i\rangle\}$ was arbitrary, taking the supremum over all eigenbases of $\rho$ and using Eq.~\eqref{eq:Vcl} yields Eq.~\eqref{eq:mainUR}.

It remains to establish tightness and optimality.
Let $|\psi_{\min}\rangle$ and $|\psi_{\max}\rangle$ be unit eigenvectors corresponding to $\lm$ and $\lM$, respectively, and define
\begin{equation}\label{eq:equality}
A_0
=
|\psi_{\max}\rangle\langle\psi_{\min}|
+
|\psi_{\min}\rangle\langle\psi_{\max}|.
\end{equation}
For this observable,
\begin{equation}
V_\rho^{\rm cl}(A_0)=0,\qquad
V_\rho(A_0)=\lM+\lm,
\end{equation}
and
\begin{equation}
K_{\rho,s}(A_0)=(\lM^s-\lm^s)^2.
\end{equation}
Hence Eq.~\eqref{eq:mainUR} is saturated by $A_0$.
Any increase of the coefficient in Eq.~\eqref{eq:mainUR} would therefore violate the inequality for this observable, proving optimality.

If $\rho$ is maximally mixed, then $\mathcal P_\rho(A)=A$, and hence $V_\rho^{\rm cl}(A)=V_\rho(A)$, while $K_{\rho,s}(A)=0$ for every observable.
Thus Eq.~\eqref{eq:mainUR} is saturated in this case as well.
\end{proof}

Since
\begin{equation}
K_{\rho,1/2}(A)=I_\rho(A),\qquad K_{\rho,1}(A)=K_\rho(A),
\end{equation}
Theorem~\ref{thm:s-main} simultaneously yields the sharpened Wigner--Yanase and direct-commutator relations.
Dropping the nonnegative term $V_\rho^{\rm cl}(A)$ immediately gives Eqs.~\eqref{eq:WY-opt-intro} and \eqref{eq:rho-opt-intro}.

We next consider the full Wigner--Yanase--Dyson family.
Again, the proof reduces to a scalar inequality, whose proof is given in Appendix~\ref{app:lemma}.
\begin{lemma}\label{lem:wyd-scalar}
Let $0\le m<M$ and $0<\alpha<1$.
Then, for any $x,y\in[m,M]$,
\begin{equation}\label{eq:wyd-scalar}
x+y
\ge
(M+m)
\frac{(x^\alpha-y^\alpha)(x^{1-\alpha}-y^{1-\alpha})}
{(M^\alpha-m^\alpha)(M^{1-\alpha}-m^{1-\alpha})}.
\end{equation}
\end{lemma}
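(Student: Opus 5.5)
The plan is to reduce the two-variable inequality to a one-variable monotonicity statement by homogeneity. If $x=y$, the right-hand side of Eq.~\eqref{eq:wyd-scalar} vanishes and there is nothing to prove. Both sides of Eq.~\eqref{eq:wyd-scalar} are symmetric under $x\leftrightarrow y$, because the product $(x^\alpha-y^\alpha)(x^{1-\alpha}-y^{1-\alpha})$ is symmetric. So I assume $m\le y<x\le M$. Since $t\mapsto t^\alpha$ and $t\mapsto t^{1-\alpha}$ are increasing, both factors of that product are positive, and so is the denominator $(M^\alpha-m^\alpha)(M^{1-\alpha}-m^{1-\alpha})$. The inequality is therefore equivalent to
\begin{equation*}
f(x,y):=\frac{(x^\alpha-y^\alpha)(x^{1-\alpha}-y^{1-\alpha})}{x+y}\le f(M,m).
\end{equation*}

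First I will treat the case $y=0$, which is only possible when $m=0$. Then $f(x,0)=x^\alpha x^{1-\alpha}/x=1$. Likewise $f(M,0)=1$, so equality holds.

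For $y>0$, the function $f$ is homogeneous of degree $0$, so $f(x,y)=g(t)$ with $t=x/y\ge1$. Expanding the numerator gives
\begin{equation*}
g(t)=\frac{(t^\alpha-1)(t^{1-\alpha}-1)}{t+1}=1-h(t),\qquad h(t):=\frac{t^\alpha+t^{1-\alpha}}{1+t}.
\end{equation*}
The key step is to show that $h$ is nonincreasing on $[1,\infty)$. Once that is done, $g$ is nondecreasing in $t$. Since $t=x/y\le M/m$, we get $f(x,y)\le g(M/m)=f(M,m)$ when $m>0$. When $m=0$, the same monotonicity together with $h(t)>0$ gives $g(t)<1=f(M,0)$.

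To prove the monotonicity of $h$, I substitute $t=e^{2u}$ with $u\ge0$ and set $b:=2\alpha-1\in(-1,1)$. Then $t^\alpha+t^{1-\alpha}=2e^{u}\cosh(bu)$ and $1+t=2e^{u}\cosh u$, so
\begin{equation*}
h=\frac{\cosh(bu)}{\cosh u}.
\end{equation*}
The sign of $\frac{d}{du}h$ is the sign of
\begin{equation*}
b\sinh(bu)\cosh u-\cosh(bu)\sinh u,
\end{equation*}
so $h'\le0$ is equivalent to $b\tanh(bu)\le\tanh u$. This holds because $b\tanh(bu)=|b|\tanh(|b|u)\le\tanh(|b|u)\le\tanh u$, using that $\tanh$ is odd and increasing and that $|b|<1$.

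I expect the main obstacle to be exactly this monotonicity step. The $\cosh$ substitution makes it elementary, and the remaining steps are routine bookkeeping.
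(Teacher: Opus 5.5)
Your proof is correct and complete. It reaches the same one-variable function as the paper, but proves its monotonicity by a different route.

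The paper does not prove this lemma on its own. It derives it from a unified inequality (Lemma~\ref{lem:scalar-ieq}) for general exponents $p,q,\gamma>0$ with $p+q\ge\gamma$, by writing
\[
\frac{(x^p-y^p)(x^q-y^q)}{(x+y)^\gamma}=x^{p+q-\gamma}\Bigl(\frac{t}{1+t}\Bigr)^{\gamma}(1-t^{-p})(1-t^{-q}).
\]
For $p=\alpha$, $q=1-\alpha$, $\gamma=1$ the prefactor is $1$. The remaining function of $t$ is exactly your $g(t)$, and it is increasing simply because it is a product of three positive, strictly increasing factors. The case $m=0$ is then handled by letting a positive lower endpoint $m'\to0^+$.

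You instead expand the numerator, using $\alpha+(1-\alpha)=1$, to get $g=1-\cosh(bu)/\cosh u$. You then prove monotonicity via $|b|\tanh(|b|u)\le\tanh u$.

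The paper's factorization is more elementary. More importantly, it carries over verbatim to the power-commutator case ($p=q=s$) and the QFI case ($p=q=1$, $\gamma=2$). It also gives the strictness used in Theorem~\ref{thm:universal-exactness}. Your expansion, by contrast, depends on the exponents summing to one.

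In return, your closed form makes the $\alpha\leftrightarrow1-\alpha$ symmetry explicit. It shows directly that $\sup_t g(t)=1$, so the case $m=0$ needs no limiting argument. It also yields strictness, since $|b|\tanh(|b|u)<\tanh u$ for $u>0$. One small point: in the case $m=0$ you only need $h>0$, not the monotonicity of $h$.
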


\begin{theorem}\label{thm:wyd-extension}
For every density operator $\rho$, every observable $A$, and $0<\alpha<1$,
\begin{equation}\label{eq:wyd-extension}
V_\rho(A)
\ge
V_\rho^{\rm cl}(A)
+
\frac{\lM+\lm}
{(\lM^\alpha-\lm^\alpha)(\lM^{1-\alpha}-\lm^{1-\alpha})}
I_{\rho,\alpha}(A).
\end{equation}
When $\rho$ is maximally mixed, the second term on the right-hand side is understood to be zero.
For every nonmaximally mixed state $\rho$, the relation is tight.
In particular, the displayed coefficient is optimal: it cannot be replaced by any larger coefficient while preserving Eq.~\eqref{eq:wyd-extension} for every observable $A$.
\end{theorem}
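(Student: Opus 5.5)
The argument follows the proof of Theorem~\ref{thm:s-main} step by step, with Lemma~\ref{lem:wyd-scalar} taking the place of Lemma~\ref{lem:sq}. First assume $\lm<\lM$. Fix an arbitrary orthonormal eigenbasis $\{|\phi_i\rangle\}$ of $\rho$ with $\rho=\sum_i\lambda_i|\phi_i\rangle\langle\phi_i|$, and write $A_{ij}=\langle\phi_i|A|\phi_j\rangle$. By the decomposition \eqref{eq:variance-basis-decomp}, the excess $V_\rho(A)-V_\rho(A^{\rm cl}_{\{\phi_i\}})$ equals $\sum_{i<j}(\lambda_i+\lambda_j)|A_{ij}|^2$.

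Next we express the WYD information in the same basis. Since $[\rho^\alpha,A]_{ij}=(\lambda_i^\alpha-\lambda_j^\alpha)A_{ij}$, expanding the trace of the product of the two commutators gives
\begin{equation}
I_{\rho,\alpha}(A)=\frac12\sum_{i,j}(\lambda_i^\alpha-\lambda_j^\alpha)(\lambda_i^{1-\alpha}-\lambda_j^{1-\alpha})|A_{ij}|^2=\sum_{i<j}(\lambda_i^\alpha-\lambda_j^\alpha)(\lambda_i^{1-\alpha}-\lambda_j^{1-\alpha})|A_{ij}|^2 .
\end{equation}
To obtain the minus sign, use $[\rho^{1-\alpha},A]_{ji}=(\lambda_j^{1-\alpha}-\lambda_i^{1-\alpha})A_{ji}$ and $A_{ji}=\overline{A_{ij}}$. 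Each weight in this sum is nonnegative, because $t\mapsto t^\alpha$ and $t\mapsto t^{1-\alpha}$ are both increasing.

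Now apply Lemma~\ref{lem:wyd-scalar} with $m=\lm$, $M=\lM$, $x=\lambda_i$, $y=\lambda_j$. Multiplying the resulting inequality by $|A_{ij}|^2$ and summing over $i<j$ gives
\begin{equation}
V_\rho(A)\ge V_\rho\bigl(A^{\rm cl}_{\{\phi_i\}}\bigr)+c_\alpha(\rho)\,I_{\rho,\alpha}(A),
\end{equation}
where $c_\alpha(\rho)$ is the displayed coefficient. The basis was arbitrary, so taking the supremum over eigenbases and using Eq.~\eqref{eq:Vcl} yields Eq.~\eqref{eq:wyd-extension}. Equivalently, one can choose the maximizing basis from Proposition~\ref{prop:classical-pinching}.

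For tightness and optimality we reuse the observable $A_0$ of Eq.~\eqref{eq:equality}. Its only nonzero entries are the two off-diagonal entries between $|\psi_{\max}\rangle$ and $|\psi_{\min}\rangle$, so its pinching vanishes and $V_\rho^{\rm cl}(A_0)=0$. Its mean is zero, and $V_\rho(A_0)=\lM+\lm$. From the formula above, $I_{\rho,\alpha}(A_0)=(\lM^\alpha-\lm^\alpha)(\lM^{1-\alpha}-\lm^{1-\alpha})$. Hence $A_0$ gives equality in Eq.~\eqref{eq:wyd-extension}, and any larger coefficient would violate the inequality for $A_0$.

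In the maximally mixed case, $\mathcal P_\rho(A)=A$ and $I_{\rho,\alpha}(A)=0$, so the relation holds with equality.

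The only nonroutine ingredient is the scalar inequality, Lemma~\ref{lem:wyd-scalar}, which is assumed here. Given that lemma, the one point that needs care is the sign and symmetrization when rewriting $I_{\rho,\alpha}$ as the sum over $i<j$.
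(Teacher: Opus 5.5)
Your proof is correct and follows the paper's argument essentially step for step. The steps are the eigenbasis decomposition \eqref{eq:variance-basis-decomp}, the pairwise expansion of $I_{\rho,\alpha}(A)$, termwise use of Lemma~\ref{lem:wyd-scalar}, the supremum over eigenbases, and saturation by $A_0$. Your explicit check of the sign and the $i\leftrightarrow j$ symmetrization in that expansion fills in a step the paper states without computation.
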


\begin{proof}
Assume first that $\lm<\lM$.
Fix an arbitrary orthonormal eigenbasis $\{|\phi_i\rangle\}_{i=1}^d$ of $\rho$ and set $A_{ij}:=\langle\phi_i|A|\phi_j\rangle$.
By Eq.~\eqref{eq:variance-basis-decomp},
\begin{equation}
V_\rho(A)-V_\rho\left(A^{\rm cl}_{\{\phi_i\}}\right)
=
\sum_{i<j}(\lambda_i+\lambda_j)|A_{ij}|^2.
\end{equation}
From Eq.~\eqref{eq:WYD},
\begin{equation}
I_{\rho,\alpha}(A)
=
\sum_{i<j}
(\lambda_i^\alpha-\lambda_j^\alpha)
(\lambda_i^{1-\alpha}-\lambda_j^{1-\alpha})
|A_{ij}|^2.
\label{eq:wyd-eigen-expansion}
\end{equation}
Applying Lemma~\ref{lem:wyd-scalar} with $m=\lm$, $M=\lM$, $x=\lambda_i$, and $y=\lambda_j$, multiplying by $|A_{ij}|^2$, and summing over $i<j$, we obtain
\begin{equation}
V_\rho(A)
\ge
V_\rho\left(A^{\rm cl}_{\{\phi_i\}}\right)
+
\frac{\lM+\lm}
{(\lM^\alpha-\lm^\alpha)(\lM^{1-\alpha}-\lm^{1-\alpha})}
I_{\rho,\alpha}(A).
\end{equation}
Since the eigenbasis was arbitrary, taking the supremum over all eigenbases of $\rho$ and using Eq.~\eqref{eq:Vcl} proves Eq.~\eqref{eq:wyd-extension}.

For the observable $A_0$ defined in Eq.~\eqref{eq:equality},
\begin{equation}
V_\rho^{\rm cl}(A_0)=0,\qquad
V_\rho(A_0)=\lM+\lm,
\end{equation}
while
\begin{equation}
I_{\rho,\alpha}(A_0)
=
(\lM^\alpha-\lm^\alpha)
(\lM^{1-\alpha}-\lm^{1-\alpha}).
\end{equation}
Hence Eq.~\eqref{eq:wyd-extension} is saturated by $A_0$, and any increase of the coefficient would violate the inequality for this observable.
Thus the coefficient is optimal.

If $\rho$ is maximally mixed, then $I_{\rho,\alpha}(A)=0$ and $V_\rho^{\rm cl}(A)=V_\rho(A)$ for every observable, so Eq.~\eqref{eq:wyd-extension} is saturated.
\end{proof}

At $\alpha=1/2$, Theorem~\ref{thm:wyd-extension} coincides with the $s=1/2$ case of Theorem~\ref{thm:s-main}, as expected.

\bigskip 

We finally consider the normalized quantum Fisher information $\tilde F_\rho(A)$ defined in Eq.~\eqref{eq:QFI-def}.
The required scalar inequality is the following; its proof is also given in Appendix~\ref{app:lemma}.

\begin{lemma}\label{lem:qfi-scalar}
Let $0\le m<M$.
Then, for any $x,y\in[m,M]$ with $x+y>0$,
\begin{equation}\label{eq:QFI-scalar}
x+y
\ge
\left(\frac{M+m}{M-m}\right)^2
\frac{(x-y)^2}{x+y}.
\end{equation}
\end{lemma}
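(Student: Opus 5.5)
The inequality is equivalent, after multiplying by $x+y>0$, to
\begin{equation*}
(x+y)^2(M-m)^2 \ge (M+m)^2(x-y)^2 ,
\end{equation*}
and since both sides involve nonnegative quantities, to
\begin{equation*}
(x+y)(M-m)\ge (M+m)\rnorm{x-y}.
\end{equation*}
By symmetry in $x$ and $y$ we may assume $x\ge y$. The claim then reads
\begin{equation*}
(x+y)(M-m)-(M+m)(x-y)\ge 0 ,
\end{equation*}
and expanding the left-hand side gives
\begin{equation*}
xM - xm + yM - ym - xM - xm + yM + ym = 2(yM - xm).
\end{equation*}
So everything reduces to the elementary bound $yM\ge xm$.

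To prove $yM\ge xm$, I would use the constraints $x,y\in[m,M]$ directly: $y\ge m\ge 0$ and $M\ge x\ge 0$, so $yM\ge mM\ge mx$. This completes the argument. Equality holds exactly when $y=m$ and $x=M$ (or $m=0$ with $y=0$, etc.), which matches the saturating observable $A_0$ of Eq.~\eqref{eq:equality} used in the preceding theorems.

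I do not expect any step to be a real obstacle. The only care needed is reducing the squared inequality to the unsquared one, which is legitimate since $x+y>0$, $M-m>0$, and $M+m>0$. The case $m=0$, $y=0$, $x>0$ should also be checked: there both sides equal $x$, so the inequality holds with equality. This is consistent with the remark that the QFI coefficient reduces to $1$ for non-faithful states.
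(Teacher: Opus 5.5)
Your proof is correct, and it is a more elementary route than the paper's. The paper does not prove this lemma directly. It obtains it as the case $p=q=1$, $\gamma=2$ of the unified scalar inequality in Appendix~\ref{app:lemma}, whose proof runs as follows:
\begin{enumerate}
\item for $m>0$, it writes the ratio as $x^{p+q-\gamma}G(t)$ with $t=x/y$, where here $G(t)=\bigl(\tfrac{t-1}{t+1}\bigr)^2$;
\item it shows that $G$ is strictly increasing and bounds $t\le R=M/m$;
\item it handles $m=0$ (and then $x=0$ or $y=0$) separately, by letting $m'\to0^+$ and invoking continuity.
\end{enumerate}
Your reduced inequality $yM\ge xm$ is exactly the cleared-denominator form of the paper's key fact $t\le R$, so the underlying idea is the same. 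The difference is that for $(p,q,\gamma)=(1,1,2)$ both sides are squares of nonnegative expressions linear in $x,y$. Taking square roots turns the claim into a linear inequality, so you need no division by $y$ or $m$, no monotonicity computation, and no limiting argument; the cases $m=0$ and $y=0$ are covered by the same two-line estimate $yM\ge mM\ge mx$. Your argument also gives the equality case immediately: for $x\ge y$, equality holds iff $y=m$ and either $x=M$ or $m=0$. That contains the strictness statement the paper needs for the QFI part of Theorem~\ref{thm:universal-exactness}. What the paper's route buys is uniformity. The same $G(t)$-monotonicity argument also covers the power-commutator case $(s,s,1)$ and the Wigner--Yanase--Dyson case $(\alpha,1-\alpha,1)$. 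In those cases the two sides are not squares of linear expressions, so your linearization does not carry over.
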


\begin{theorem}\label{thm:QFI-extension}
For every density operator $\rho$ and every observable $A$,
\begin{equation}\label{eq:QFI-extension}
V_\rho(A)
\ge
V_\rho^{\rm cl}(A)
+
\left(
\frac{\lM+\lm}{\lM-\lm}
\right)^2
\tilde F_\rho(A).
\end{equation}
When $\rho$ is maximally mixed, the second term on the right-hand side is understood to be zero.
For every nonmaximally mixed state $\rho$, the relation is tight.
In particular, the displayed coefficient is optimal: it cannot be replaced by any larger coefficient while preserving Eq.~\eqref{eq:QFI-extension} for every observable $A$.
\end{theorem}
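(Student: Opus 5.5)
The proof will follow the same pattern as the proofs of Theorems~\ref{thm:s-main} and \ref{thm:wyd-extension}. Assume first that $\lm<\lM$. Fix an arbitrary orthonormal eigenbasis $\{|\phi_i\rangle\}$ of $\rho$ with eigenvalues $\lambda_i$, and set $A_{ij}:=\langle\phi_i|A|\phi_j\rangle$. The decomposition \eqref{eq:variance-basis-decomp} gives
\begin{equation}
V_\rho(A)-V_\rho\left(A^{\rm cl}_{\{\phi_i\}}\right)=\sum_{i<j}(\lambda_i+\lambda_j)|A_{ij}|^2 .
\end{equation}
Next, rewrite the definition \eqref{eq:QFI-def} as a sum over unordered pairs. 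Diagonal terms vanish and the summand is symmetric in $i,j$, so
\begin{equation}
\tilde F_\rho(A)=\sum_{i<j,\ \lambda_i+\lambda_j>0}\frac{(\lambda_i-\lambda_j)^2}{\lambda_i+\lambda_j}|A_{ij}|^2 .
\end{equation}

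For each pair with $\lambda_i+\lambda_j>0$, apply Lemma~\ref{lem:qfi-scalar} with $m=\lm$, $M=\lM$, $x=\lambda_i$, $y=\lambda_j$. This gives
\begin{equation}
\lambda_i+\lambda_j\ge\left(\frac{\lM+\lm}{\lM-\lm}\right)^2\frac{(\lambda_i-\lambda_j)^2}{\lambda_i+\lambda_j}.
\end{equation}
Pairs with $\lambda_i+\lambda_j=0$ contribute zero to both sides: they are absent from $\tilde F_\rho$ and have weight $0$ in the variance difference. The one point needing care is therefore this non-faithful case, where the QFI sum is restricted to $\lambda_i+\lambda_j>0$. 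Multiplying by $|A_{ij}|^2$ and summing over $i<j$ gives the inequality with $V_\rho(A^{\rm cl}_{\{\phi_i\}})$ in place of $V_\rho^{\rm cl}(A)$. Since the eigenbasis was arbitrary, taking the supremum as in Eq.~\eqref{eq:Vcl} yields Eq.~\eqref{eq:QFI-extension}.

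For tightness and optimality, use the observable $A_0$ of Eq.~\eqref{eq:equality}. As before, $V_\rho^{\rm cl}(A_0)=0$ and $V_\rho(A_0)=\lM+\lm$. Also $\lM+\lm>0$, and only the pair $(\psi_{\max},\psi_{\min})$ contributes to the QFI, so $\tilde F_\rho(A_0)=(\lM-\lm)^2/(\lM+\lm)$. Multiplying by the coefficient gives exactly $\lM+\lm$, so equality holds. Any larger coefficient would therefore violate the inequality for $A_0$, which proves optimality.

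For the maximally mixed case, every $\lambda_i$ is equal. Hence $\tilde F_\rho(A)=0$, $\mathcal P_\rho(A)=A$, and by Proposition~\ref{prop:classical-pinching} $V_\rho^{\rm cl}(A)=V_\rho(A)$, so the relation is saturated.

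I do not expect a real obstacle in this argument. The only substantive step is the scalar inequality, which is Lemma~\ref{lem:qfi-scalar} and may be assumed.
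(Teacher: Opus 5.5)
Your proposal is correct and follows the paper's proof essentially step by step. It uses the same eigenbasis decomposition \eqref{eq:variance-basis-decomp}, the same pairwise application of Lemma~\ref{lem:qfi-scalar}, the supremum over eigenbases, the same saturating observable $A_0$, and the separate treatment of the maximally mixed case; your explicit remark that pairs with $\lambda_i+\lambda_j=0$ carry zero weight on both sides is a small clarification the paper leaves implicit.
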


\begin{proof}
Assume first that $\lm<\lM$.
Fix an arbitrary orthonormal eigenbasis $\{|\phi_i\rangle\}_{i=1}^d$ of $\rho$ and set $A_{ij}:=\langle\phi_i|A|\phi_j\rangle$.
By Eq.~\eqref{eq:variance-basis-decomp},
\begin{equation}
V_\rho(A)-V_\rho\left(A^{\rm cl}_{\{\phi_i\}}\right)
=
\sum_{i<j}(\lambda_i+\lambda_j)|A_{ij}|^2.
\end{equation}
Using Eq.~\eqref{eq:QFI-def}, we have
\begin{equation}
\tilde F_\rho(A)
=
\sum_{\substack{i<j\\ \lambda_i+\lambda_j>0}}
\frac{(\lambda_i-\lambda_j)^2}{\lambda_i+\lambda_j}|A_{ij}|^2.
\label{eq:QFI-pair-expansion}
\end{equation}
For every pair $i<j$ with $\lambda_i+\lambda_j>0$, Lemma~\ref{lem:qfi-scalar}, applied with $m=\lm$, $M=\lM$, $x=\lambda_i$, and $y=\lambda_j$, gives
\begin{equation}
\lambda_i+\lambda_j
\ge
\left(
\frac{\lM+\lm}{\lM-\lm}
\right)^2
\frac{(\lambda_i-\lambda_j)^2}{\lambda_i+\lambda_j}.
\end{equation}
Multiplying by $|A_{ij}|^2$ and summing over all such pairs yields
\begin{equation}
V_\rho(A)
\ge
V_\rho\left(A^{\rm cl}_{\{\phi_i\}}\right)
+
\left(
\frac{\lM+\lm}{\lM-\lm}
\right)^2
\tilde F_\rho(A).
\end{equation}
Since the eigenbasis was arbitrary, taking the supremum over all eigenbases of $\rho$ and using Eq.~\eqref{eq:Vcl} proves Eq.~\eqref{eq:QFI-extension}.

For the observable $A_0$ defined in Eq.~\eqref{eq:equality},
\begin{equation}
V_\rho^{\rm cl}(A_0)=0,\qquad
V_\rho(A_0)=\lM+\lm,
\end{equation}
and
\begin{equation}
\tilde F_\rho(A_0)
=
\frac{(\lM-\lm)^2}{\lM+\lm}.
\end{equation}
Hence Eq.~\eqref{eq:QFI-extension} is saturated by $A_0$, and any increase of the coefficient would violate the inequality for this observable.
Thus the coefficient is optimal.

If $\rho$ is maximally mixed, then $\tilde F_\rho(A)=0$ and $V_\rho^{\rm cl}(A)=V_\rho(A)$ for every observable, so Eq.~\eqref{eq:QFI-extension} is saturated.
\end{proof}

Theorems~\ref{thm:s-main}, \ref{thm:wyd-extension}, and \ref{thm:QFI-extension} establish the sharpened relation \eqref{eq:GeneralUR2} for all the families considered in this work.
Since $V_\rho^{\rm cl}(A)\ge0$, dropping the classical contribution immediately yields the corresponding multiplicative bounds of the form \eqref{eq:GeneralUR}.
Moreover, the same observable $A_0$ that establishes tightness of the sharpened relations satisfies $V_\rho^{\rm cl}(A_0)=0$.
Consequently, adding the maximal classical contribution does not reduce the optimal coefficient of the quantum-uncertainty term.

\section{Characterization of universal exactness and qubit systems}
\label{sec:qubit-single}

Having established the sharpened inequalities in the previous section, we now ask a stronger structural question: for which states do these inequalities become exact identities for every observable?  The answer is completely determined by the eigenvalue structure of the state.  Within the class of full-rank nonmaximally mixed states, universal exactness occurs if and only if the state has exactly two distinct eigenvalues.

\begin{theorem}\label{thm:universal-exactness}
Let $\rho$ be a full-rank, nonmaximally mixed density operator, with
$0<\lambda_{\min}<\lambda_{\max}$.
Then the following statements are equivalent.
\begin{enumerate}
\item[(i)]
$\rho$ has exactly two distinct eigenvalues.

\item[(ii)]
For every observable $A$ and every $s\in[1/2,\infty)$,
\begin{equation}
  V_\rho(A)
  =
  V_\rho^{\mathrm{cl}}(A)
  +
  \frac{\lambda_{\max}+\lambda_{\min}}
  {2(\lambda_{\max}^s-\lambda_{\min}^s)^2}
  \norm{\Com{A}{\rho^s}}^2 .
  \label{eq:two-eigenvalue-rhos-identity}
\end{equation}

\item[(iii)]
For every observable $A$ and every $0<\alpha<1$,
\begin{equation}
  V_\rho(A)
  =
  V_\rho^{\mathrm{cl}}(A)
  +
  \frac{\lambda_{\max}+\lambda_{\min}}
  {(\lambda_{\max}^\alpha-\lambda_{\min}^\alpha)
   (\lambda_{\max}^{1-\alpha}-\lambda_{\min}^{1-\alpha})}
  I_{\rho,\alpha}(A).
  \label{eq:two-eigenvalue-wyd-identity}
\end{equation}

\item[(iv)]
For every observable $A$,
\begin{equation}
  V_\rho(A)
  =
  V_\rho^{\mathrm{cl}}(A)
  +
  \left(
  \frac{\lambda_{\max}+\lambda_{\min}}
  {\lambda_{\max}-\lambda_{\min}}
  \right)^2 \tilde{F}{\rho}(A).
  \label{eq:two-eigenvalue-qfi-identity}
\end{equation}
\end{enumerate}
In particular, the two-eigenvalue condition is necessary and sufficient separately for universal exactness of each of the three sharpened families.
\end{theorem}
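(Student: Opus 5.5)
The plan is to work entirely with the pinching form of the classical part. By Proposition~\ref{prop:classical-pinching}, $V_\rho^{\rm cl}(A)=V_\rho(\mathcal P_\rho(A))$. Fix an eigenbasis $\{|\phi_i\rangle\}$ of $\rho$ that diagonalizes each block $P_\mu AP_\mu$; for this basis $A^{\rm cl}_{\{\phi_i\}}=\mathcal P_\rho(A)$. Equation~\eqref{eq:variance-basis-decomp} then gives the exact identity
\begin{equation*}
V_\rho(A)-V_\rho^{\rm cl}(A)=\sum_{i<j}(\lambda_i+\lambda_j)|A_{ij}|^2 .
\end{equation*}
In this basis $A_{ij}=0$ whenever $\lambda_i=\lambda_j$ and $i\neq j$, because off-diagonal entries inside a degenerate block vanish. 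So the sum effectively runs only over pairs with $\lambda_i\neq\lambda_j$. The quantities $K_{\rho,s}(A)$, $I_{\rho,\alpha}(A)$ and $\tilde F_\rho(A)$ have the pair expansions already used in the proofs of Theorems~\ref{thm:s-main}, \ref{thm:wyd-extension} and \ref{thm:QFI-extension}, with weights $g(\lambda_i,\lambda_j)$, and each weight vanishes when $\lambda_i=\lambda_j$. Writing $c$ for the corresponding optimal coefficient, each of (ii)--(iv) is therefore equivalent to
\begin{equation*}
\sum_{i<j,\ \lambda_i\neq\lambda_j}\bigl[(\lambda_i+\lambda_j)-c\,g(\lambda_i,\lambda_j)\bigr]|A_{ij}|^2=0\quad\text{for all }A.
\end{equation*}
Every bracket is nonnegative by Lemmas~\ref{lem:sq}, \ref{lem:wyd-scalar} and \ref{lem:qfi-scalar}.

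For (i)$\Rightarrow$(ii),(iii),(iv), note that with exactly two distinct eigenvalues every pair with $\lambda_i\neq\lambda_j$ is $\{\lm,\lM\}$. For such a pair the bracket is zero by the construction of $c$; this is exactly the saturation computation already done with $A_0$. Hence the identities hold for every $A$ and every $s$ and $\alpha$.

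For the converse, each of (ii), (iii), (iv) separately implies (i); for (ii) and (iii) it suffices to use a single $s$ or $\alpha$. Suppose $\rho$ has a third distinct eigenvalue $\lambda$ with $\lm<\lambda<\lM$, and let $|\psi\rangle$ be a corresponding unit eigenvector. Take $A=|\psi\rangle\langle\psi_{\max}|+|\psi_{\max}\rangle\langle\psi|$. Then $\mathcal P_\rho(A)=0$, so $V_\rho^{\rm cl}(A)=0$, and the defect equals $(\lambda+\lM)-c\,g(\lambda,\lM)$. It therefore suffices to show that this is strictly positive. In other words, equality in each scalar lemma with $x,y\in[m,M]$, $x\neq y$ and $0<m$ can occur only when $\{x,y\}=\{m,M\}$.

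Strictness is the main obstacle, since the lemmas as stated give only non-strict inequalities. I would handle it by fixing $y=M$ and studying $h(x)=(x+M)/g(x,M)$ on $[m,M)$. The aim is to show that $h$ is strictly increasing in $x$, so that $h(x)>h(m)=c$ for $x>m$. Concretely:
\begin{itemize}
\item For the QFI case, $h(x)=((x+M)/(M-x))^2$, which is clearly strictly increasing.
\item For $K_{\rho,s}$, $h(x)=(x+M)/(M^s-x^s)^2$, whose numerator increases and whose denominator decreases, so $h$ is strictly increasing.
\item For WYD, both factors $M^\alpha-x^\alpha$ and $M^{1-\alpha}-x^{1-\alpha}$ decrease in $x$, so again $h$ is strictly increasing.
\end{itemize}
Since $g(\lambda,\lM)>0$, this gives $(\lambda+\lM)>c\,g(\lambda,\lM)$, so the identity fails for this $A$, which proves each contrapositive. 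Full rank ($\lm>0$) is not essential here, but it keeps $c$ finite and the QFI sum unrestricted.
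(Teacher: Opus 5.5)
Your proof is correct, but it differs from the paper's in both directions.

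\textbf{Sufficiency.} The paper works basis-free with the off-diagonal block $X=P_-AP_+$. Using $X^2=0$, it shows $V_\rho(A)-V_\rho^{\rm cl}(A)=(\lambda_-+\lambda_+)\|X\|^2$ and writes each noncommutative term as an explicit multiple of $\|X\|^2$. You instead choose the eigenbasis that diagonalizes every block $P_\mu AP_\mu$. Then Eq.~\eqref{eq:variance-basis-decomp} holds with $V_\rho^{\rm cl}(A)$ itself, and all within-eigenspace off-diagonal entries vanish. This reduces all three identities at once to saturation of the scalar inequality at the pair $\{\lambda_{\min},\lambda_{\max}\}$, reusing the pair expansions from Theorems~\ref{thm:s-main}--\ref{thm:QFI-extension}. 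The paper's block computation instead shows that degeneracy is irrelevant without any basis choice.

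\textbf{Necessity.} The paper couples $\lambda_{\min}$ to an intermediate eigenvalue $\mu$. It then invokes the strictness clause of the unified Lemma~\ref{lem:scalar-ieq} (the ratio $t=\mu/m<R$ and strict monotonicity of $G$), which in fact characterizes all equality cases. You couple the intermediate eigenvalue to $\lambda_{\max}$ instead, and only need $h(x)=(x+M)/g(x,M)$ to be strictly increasing. That is immediate in each case, since the numerator grows while $g(\cdot,M)$ shrinks. Your choice is more elementary and also more robust: it still works when $\lambda_{\min}=0$. In that case the paper's witness $A_{m\mu}$ actually saturates the WYD and QFI relations (and the $s=1/2$ case), because $c=1$ and $g(\mu,0)=\mu=V_\rho(A_{m\mu})$. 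This extension is not needed for the stated full-rank theorem, but it is a genuine bonus of your route.

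One small inaccuracy in your closing aside: full rank is not what keeps $c$ finite. $c$ is finite whenever $\lambda_{\min}<\lambda_{\max}$; for WYD and QFI it equals $1$ when $\lambda_{\min}=0$.
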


\begin{proof}
We first prove the sufficiency, $(\mathrm{i})\Rightarrow(\mathrm{ii})$, $(\mathrm{iii})$, and $(\mathrm{iv})$.  Assume that $\rho$ has exactly two distinct eigenvalues.  Write its spectral decomposition as
\begin{equation}
  \rho
  =
  \lambda_- P_-
  +
  \lambda_+ P_+,
  \qquad
  0<\lambda_-<\lambda_+,
  \qquad
  P_-+P_+=\I,
  \label{eq:two-ev-st}
\end{equation}
where $\lambda_-=\lambda_{\min}$ and $\lambda_+=\lambda_{\max}$.  The ranks of $P_-$ and $P_+$ are arbitrary.  Hence there is only one pair of eigenspaces carrying different eigenvalues, although each eigenspace may have arbitrary dimension.

Set
\[
  X:=P_-AP_+ .
\]
Since $A$ is self-adjoint, the opposite off-diagonal block is
\[
  P_+AP_-=X^\dagger .
\]
Proposition~\ref{prop:classical-pinching} gives
\[
  V_\rho^{\mathrm{cl}}(A)
  =
  V_\rho(D),
\]
where 
\[
D:=\mathcal P_\rho(A)
   =P_-AP_-+P_+AP_+.
\]
Note that 
\[
A=D+X+X^\dagger.
\]
Expanding $A^2$ around the block-diagonal part $D$ gives
\begin{align*}
A^2-D^2
&=
D(X+X^\dagger)
+(X+X^\dagger)D
+(X+X^\dagger)^2.
\end{align*}
Both $\rho$ and $D$ are block diagonal with respect to
$\operatorname{Ran}P_-\oplus\operatorname{Ran}P_+$, whereas $X$ and $X^\dagger$ are off diagonal.  Therefore the trace of every product containing exactly one off-diagonal block vanishes, and hence
\[
\Tr(\rho A)=\Tr(\rho D),
\qquad
\Tr\left(\rho\{D,X+X^\dagger\}\right)=0.
\]
Moreover, $P_-P_+=0$ implies
\[
X^2=(X^\dagger)^2=0,
\]
so that
\[
(X+X^\dagger)^2
=
XX^\dagger+X^\dagger X.
\]
Using these facts in the definition of the variance, and using
$V_\rho^{\mathrm{cl}}(A)=V_\rho(D)$, we obtain
\begin{align}
V_\rho(A)-V_\rho^{\mathrm{cl}}(A)
&=
\Tr\left[\rho(XX^\dagger+X^\dagger X)\right]
\nonumber\\
&=
\lambda_-\Tr(XX^\dagger)
+\lambda_+\Tr(X^\dagger X)
\nonumber\\
&=
(\lambda_-+\lambda_+)\|X\|^2.
\label{eq:two-ev-off-var}
\end{align}

We next evaluate the three noncommutative quantities appearing in Theorems~\ref{thm:s-main}, \ref{thm:wyd-extension}, and~\ref{thm:QFI-extension}.  Since
\[
\rho^s=\lambda_-^sP_-+\lambda_+^sP_+,
\]
we have
\[
X\rho^s=\lambda_+^sX,
\qquad
\rho^sX=\lambda_-^sX,
\]
and therefore
\[
[X,\rho^s]
=
(\lambda_+^s-\lambda_-^s)X.
\]
Similarly,
\[
[X^\dagger,\rho^s]
=
-(\lambda_+^s-\lambda_-^s)X^\dagger.
\]
Because $[D,\rho^s]=0$, it follows that
\[
[A,\rho^s]
=
(\lambda_+^s-\lambda_-^s)(X-X^\dagger).
\]
Using $X^2=(X^\dagger)^2=0$ once again,
\begin{align*}
\|X-X^\dagger\|^2
&=
\Tr\left[(X-X^\dagger)^\dagger(X-X^\dagger)\right]\\
&=
\Tr(X^\dagger X+XX^\dagger)\\
&=
2\|X\|^2.
\end{align*}
Consequently,
\begin{equation}
\|[A,\rho^s]\|^2
=
2(\lambda_+^s-\lambda_-^s)^2\|X\|^2.
\label{eq:two-eigenvalue-commutator}
\end{equation}
Substituting Eq.~\eqref{eq:two-eigenvalue-commutator} into Eq.~\eqref{eq:two-ev-off-var} shows that equality holds in Eq.~\eqref{eq:two-eigenvalue-rhos-identity}.

This proves $(\mathrm{i})\Rightarrow(\mathrm{ii})$.
For the Wigner--Yanase--Dyson skew information,
\[
[\rho^\alpha,A]
=
(\lambda_+^\alpha-\lambda_-^\alpha)(X^\dagger-X),
\]
and
\[
[\rho^{1-\alpha},A]
=
(\lambda_+^{1-\alpha}-\lambda_-^{1-\alpha})(X^\dagger-X).
\]
Therefore,
\begin{align}
[\rho^\alpha,A][\rho^{1-\alpha},A]
&=
(\lambda_+^\alpha-\lambda_-^\alpha)
(\lambda_+^{1-\alpha}-\lambda_-^{1-\alpha})
(X^\dagger-X)^2
\nonumber\\
&=
-(\lambda_+^\alpha-\lambda_-^\alpha)
(\lambda_+^{1-\alpha}-\lambda_-^{1-\alpha})
(X^\dagger X+XX^\dagger),
\end{align}
where the second equality again uses $X^2=(X^\dagger)^2=0$.  By the definition of $I_{\rho,\alpha}(A)$,
\begin{align}
I_{\rho,\alpha}(A)
&=
-\frac12\Tr\left([\rho^\alpha,A][\rho^{1-\alpha},A]\right)
\nonumber\\
&=
(\lambda_+^\alpha-\lambda_-^\alpha)
(\lambda_+^{1-\alpha}-\lambda_-^{1-\alpha})
\|X\|^2.
\label{eq:two-eigenvalue-wyd-block}
\end{align}
Combining Eq.~\eqref{eq:two-eigenvalue-wyd-block} with Eq.~\eqref{eq:two-ev-off-var} shows the equality in \eqref{eq:two-eigenvalue-wyd-identity}. 
Thus $(\mathrm{i})\Rightarrow(\mathrm{iii})$.

Finally, for the quantum Fisher information, the only terms with distinct eigenvalues are the two off-diagonal blocks $P_-AP_+$ and $P_+AP_-$. 
From Eq.~\eqref{eq:QFI-def}, the two ordered contributions combine to give
\begin{equation}
\tilde{F}_{\rho}(A)
=
\frac{(\lambda_+-\lambda_-)^2}{\lambda_++\lambda_-}\|X\|^2.
\label{eq:two-eigenvalue-qfi-block}
\end{equation}
Substituting Eq.~\eqref{eq:two-eigenvalue-qfi-block} into Eq.~\eqref{eq:two-ev-off-var} gives equality \eqref{eq:two-eigenvalue-qfi-identity}. 
Hence $(\mathrm{i})\Rightarrow(\mathrm{iv})$.

We now prove the converse implications.  Because $\rho$ is assumed nonmaximally mixed, it has at least two distinct eigenvalues.
If condition~$(\mathrm{i})$ fails, it therefore has at least three distinct eigenvalues.
Choose an eigenvalue $\mu$ satisfying
\[
0<m:=\lambda_{\min}<\mu<M:=\lambda_{\max}.
\]
Let $|m\rangle$ and $|\mu\rangle$ be unit eigenvectors belonging to $m$ and $\mu$, respectively, and define the Hermitian observable
\[
A_{m\mu}:=|m\rangle\langle\mu|+|\mu\rangle\langle m|.
\]
This observable has no block-diagonal part with respect to the spectral decomposition of $\rho$.  Hence $\mathcal P_\rho(A_{m\mu})=0$ and
\[
V_\rho^{\rm cl}(A_{m\mu})=0.
\]
Moreover, $\Tr(\rho A_{m\mu})=0$ and
\[
A_{m\mu}^2
=|m\rangle\langle m|+|\mu\rangle\langle\mu|,
\]
so
\[
V_\rho(A_{m\mu})
=\Tr(\rho A_{m\mu}^2)
=m+\mu.
\]

For the power-commutator family, direct evaluation gives
\[
\|[A_{m\mu},\rho^s]\|^2
=2(\mu^s-m^s)^2.
\]
Therefore the noncommutative term in the sharpened inequality of Theorem~\ref{thm:s-main} equals
\[
\frac{M+m}{(M^s-m^s)^2}(\mu^s-m^s)^2.
\]
To compare this quantity with $m+\mu$, we use the strictness contained in the proof of Lemma~\ref{lem:scalar-ieq} given in Appendix \ref{app:lemma}.  
In that proof, we introduce the variables
\[
t=\frac{x}{y},
\qquad
R=\frac{M}{m}.
\]
For the present pair $(x,y)=(\mu,m)$, where $m<\mu<M$,
\[
1<t=\frac{\mu}{m}<\frac{M}{m}=R.
\]
The functions $t/(1+t)$, $1-t^{-p}$, and $1-t^{-q}$ appearing in Eq.~\eqref{eq:gen-rewrite} are all strictly increasing for $t>1$.
Hence the inequality in Lemma~\ref{lem:scalar-ieq} is strict whenever the pair of eigenvalues is $(\mu,m)$ rather than the minimum--maximum pair $(M,m)$.

Taking $(p,q,\gamma)=(s,s,1)$ in Lemma~\ref{lem:scalar-ieq} therefore yields
\[
m+\mu
>
\frac{M+m}{(M^s-m^s)^2}(\mu^s-m^s)^2.
\]
Thus, for every fixed $s\in[1/2,\infty)$, the sharpened power-commutator relation is strict for the observable $A_{m\mu}$.  In particular, condition~$(\mathrm{ii})$ cannot hold.

Taking $(p,q,\gamma)=(\alpha,1-\alpha,1)$ gives
\[
m+\mu
>
\frac{M+m}
{(M^\alpha-m^\alpha)(M^{1-\alpha}-m^{1-\alpha})}
(\mu^\alpha-m^\alpha)(\mu^{1-\alpha}-m^{1-\alpha}).
\]
The right-hand side is exactly the sharpened Wigner--Yanase--Dyson term for $A_{m\mu}$.  Hence the WYD relation is strict for every $0<\alpha<1$, and condition~$(\mathrm{iii})$ cannot hold.

Finally, taking $(p,q,\gamma)=(1,1,2)$ gives the strict inequality
\[
(m+\mu)^2
>
\left(\frac{M+m}{M-m}\right)^2(\mu-m)^2.
\]
Since $m+\mu>0$, division by $m+\mu$ yields
\[
m+\mu
>
\left(\frac{M+m}{M-m}\right)^2
\frac{(\mu-m)^2}{\mu+m}.
\]
For $A_{m\mu}$, Eq.~\eqref{eq:QFI-def} gives
\[
\tilde{F}_{\rho}(A)
=
\frac{(\mu-m)^2}{\mu+m}.
\]
Thus the sharpened QFI relation is also strict, so condition~$(\mathrm{iv})$ cannot hold.

We have shown that failure of condition~$(\mathrm{i})$ forces failure of each of conditions~$(\mathrm{ii})$--$(\mathrm{iv})$.  Equivalently, each of $(\mathrm{ii})$, $(\mathrm{iii})$, and $(\mathrm{iv})$ implies $(\mathrm{i})$.  Together with the sufficiency proved above, this establishes the equivalence of all four statements.
\end{proof}

Theorem~\ref{thm:universal-exactness} gives a complete characterization rather than only a sufficient condition.
Universal exactness is possible in arbitrary finite dimension, but only when all noncommuting contributions of an observable connect the same two eigenspaces associated with different eigenvalues.
Degeneracy within either eigenspace does not affect the result.
Conversely, if a third distinct eigenvalue is present, one can choose a pair of distinct eigenvalues different from $\{\lambda_{\min},\lambda_{\max}\}$.
The strictness of the unified scalar inequality in Appendix~\ref{app:lemma} then prevents universal saturation.

Having established that universal exactness is characterized by the two-eigenvalue condition, we now specialize to qubit systems.
This case allows us to analyze the unsharpened relations and their equality conditions explicitly.

Every full-rank nonmaximally mixed qubit state has exactly two distinct eigenvalues.  Therefore, Theorem~\ref{thm:universal-exactness} immediately implies the exact decomposition for this class.
The maximally mixed and pure cases will be treated separately below.
Nevertheless, we give the explicit Bloch-sphere expressions because they also clarify the unsharpened bounds and their equality conditions.

To make these statements explicit, let $\bm{\sigma}:=(\sigma_x,\sigma_y,\sigma_z)$ be the vector of Pauli matrices.
Let the qubit state be represented by the Bloch vector $r\bm{n}$, with length $0\le r\le 1$ and direction $\rnorm{\bm{n}}=1$, as
\begin{equation}
  \rho=\frac{\id+r\bm{n}\cdot\bm{\sigma}}{2}.
  \nonumber
\end{equation}
Here, $r=1$ gives a pure state, while $r=0$ gives the maximally mixed state.

For this state, we consider the spin observable in the direction $\bm a$:
\begin{equation}
  A=\bm{a}\cdot\bm{\sigma},
  \qquad
  \rnorm{\bm{a}}=1.\nonumber
\end{equation}
Since $A^2=\I$ and $\langle A\rangle_\rho=r\bm{a}\cdot\bm{n}$, it follows that
\begin{equation}
  V_\rho(A)
  =
  1-r^2(\bm{a}\cdot\bm{n})^2 .
  \label{eq:qubit-variance}
\end{equation}

To express this result geometrically, let $\theta\in[0,\pi]$ be the angle between $\bm a$ and $\bm n$.
Then $\bm a\cdot\bm n=\cos\theta$, and hence
\begin{equation}
  V_\rho(A)=1-r^2\cos^2\theta .
  \label{eq:qubit-var-theta}
\end{equation}
For $0<r\le 1$, the eigenvalues of $\rho$ are $\lM=\lambda_2 = (1+r)/2$ and $\lm=\lambda_1 = (1-r)/2$.
In the eigenbasis of $\rho$, one has $\abs{A_{12}}^2=\sin^2\theta$.
Therefore, for any $s\in[1/2,\infty)$,
\[
  \norm{\Com{A}{\rho^s}}^2
  =
  2(\lM^s-\lm^s)^2\sin^2\theta .
\]
Since $\lM+\lm=1$, the optimal coefficient \eqref{eq:optCoef}
\[
c^{\mathrm{opt}}_s(\rho)
=
\frac{\lM+\lm}{2(\lM^s-\lm^s)^2}
=
\frac{1}{2(\lM^s-\lm^s)^2}
\]
satisfies
\begin{equation}
c^{\mathrm{opt}}_s(\rho)
\norm{\Com{A}{\rho^s}}^2
=
\sin^2\theta. 
\label{eq:qubit-s-bound}
\end{equation}
The same collapse occurs for the Wigner--Yanase--Dyson and quantum-Fisher-information families.  Indeed, for $0<r\le 1$,
\begin{align}
I_{\rho,\alpha}(A)
&=
(\lM^\alpha-\lm^\alpha)
(\lM^{1-\alpha}-\lm^{1-\alpha})
\sin^2\theta,
\label{eq:qubit-wyd-block-explicit}\\
\frac{1}{4}F_Q[\rho,A]
&=
\frac{(\lM-\lm)^2}{\lM+\lm}\sin^2\theta
=
r^2\sin^2\theta.
\label{eq:qubit-qfi-block-explicit}
\end{align}
Since $\lM+\lm=1$ and $\lM-\lm=r$, the optimized WYD contribution and the optimized QFI contribution satisfy
\begin{align}
\frac{I_{\rho,\alpha}(A)}
{(\lM^\alpha-\lm^\alpha)(\lM^{1-\alpha}-\lm^{1-\alpha})}
&=\sin^2\theta,
\label{eq:qubit-wyd-opt-explicit}\\
\left(\frac{\lM+\lm}{\lM-\lm}\right)^2
\tilde{F}_\rho(A)
&=\sin^2\theta.
\label{eq:qubit-qfi-opt-explicit}
\end{align}
Thus, for every nonmaximally mixed qubit state, the optimized noncommutative terms in Theorems~\ref{thm:s-main}, \ref{thm:wyd-extension}, and~\ref{thm:QFI-extension} all reduce to the same geometric quantity $\sin^2\theta$.
The optimized noncommutative contribution in the power-commutator family is independent of $s$.
Thus, in the qubit case, all the optimized lower bounds in this family coincide.
In particular, the bounds \eqref{eq:WY-opt-intro} and \eqref{eq:rho-opt-intro} yield the same relation:
\begin{equation}
  V_\rho(A)
  =
  1-r^2\cos^2\theta
  \ge
  \sin^2\theta.
\end{equation}
The equality condition is thus given by 
\begin{equation}
  r=1
  \qquad
  \text{or}
  \qquad
  \theta=\frac{\pi}{2}.
\end{equation}
This result admits the following interpretation.
For pure states, the lower bound obtained in this work coincides with the variance itself for every observable $A$.
In other words, the variance in a pure qubit state is completely characterized by the noncommutativity between the state and the observable.
For mixed states with $0<r<1$, by contrast, equality holds only when $\bm{a}\perp\bm{n}$, that is, only when $A$ has a purely off-diagonal component in the eigenbasis of $\rho$.
In that case, $\langle A\rangle_\rho=0$ and $V_\rho(A)=1$.
Thus, in mixed states, equality does not correspond to the case where the observable becomes definite; rather, it corresponds to the case where the state is maximally uncertain with respect to that observable.

This contrast also clarifies the role of the three sharpened relations,  in which the classical contribution to the variance is kept explicitly. 
In the present qubit setting, a direct computation shows that the classical uncertainty \eqref{eq:Vcl} is given by
\begin{equation}
  V_\rho^{\mathrm{cl}}(A)
  =
  (1-r^2)\cos^2\theta .
\end{equation}

Consequently, all three sharpened relations become exact identities in the qubit case.
For $0<r<1$, this is also a direct consequence of Theorem~\ref{thm:universal-exactness}.
The pure-state case $r=1$ follows from the explicit calculation above.
For $r=0$, the state is maximally mixed. In this case $\mathcal P_\rho(A)=A$, while all three noncommutative terms vanish.
Thus all three exact identities extend to every qubit state.

Every qubit observable can be written as $A=a_0I+\bm{a}\cdot\bm{\sigma}$. Both sides of the identity are invariant under adding $a_0I$ and are homogeneous under rescaling of $\bm{a}$. Hence the result extends to an arbitrary qubit observable.

\begin{proposition}\label{prop:qubit-identity}
Let $\rho$ be any qubit state and let $A$ be any qubit observable.
For every $s\in[1/2,\infty)$,
\begin{equation}\label{eq:mainURQubit}
 V_\rho(A)
 =
 V_\rho^{\mathrm{cl}}(A)
 +
 \frac{1}{2(\lM^s-\lm^s)^2}
 \norm{\Com{A}{\rho^s}}^2.
\end{equation}
For the maximally mixed state, the second term on the right-hand side is understood to be zero.
\end{proposition}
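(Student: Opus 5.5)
We prove Proposition~\ref{prop:qubit-identity} by splitting according to the spectrum of $\rho$, i.e.\ according to the Bloch-vector length $r$. In every case we use $\lM+\lm=1$, which holds for any qubit state, so the coefficient $\frac{1}{2(\lM^s-\lm^s)^2}$ is exactly $\frac{\lM+\lm}{2(\lM^s-\lm^s)^2}$.

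\emph{Case $r=0$ (maximally mixed).} The state has a single eigenvalue, so $P_1=\I$ and $\mathcal P_\rho(A)=A$. Proposition~\ref{prop:classical-pinching} then gives $V_\rho^{\rm cl}(A)=V_\rho(A)$. The commutator term is zero by convention, so the identity holds trivially.

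\emph{Case $0<r<1$.} Here $\rho$ is full rank with $0<\lm<\lM$, so it has exactly two distinct eigenvalues. Applying the implication (i)$\Rightarrow$(ii) of Theorem~\ref{thm:universal-exactness} yields Eq.~\eqref{eq:two-eigenvalue-rhos-identity}. Substituting $\lM+\lm=1$ turns it into Eq.~\eqref{eq:mainURQubit}.

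\emph{Case $r=1$ (pure).} Theorem~\ref{thm:universal-exactness} assumes full rank, so it does not apply. However, its sufficiency argument never used $\lambda_->0$. Write $\rho=|\psi\rangle\langle\psi|=P_+$ with $\lambda_+=1$, $\lambda_-=0$, $P_-=\I-P_+$, and $X=P_-AP_+$. The block calculation leading to Eq.~\eqref{eq:two-ev-off-var} only uses $A=D+X+X^\dagger$, $X^2=0$, and the vanishing of traces containing a single off-diagonal block. It therefore gives $V_\rho(A)-V_\rho^{\rm cl}(A)=(\lambda_++\lambda_-)\|X\|^2=\|X\|^2$.

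The one point to check is the power of $\rho$. For $s\ge1/2>0$ we have $\rho^s=P_+$, consistent with the convention $0^s=0$, so Eq.~\eqref{eq:two-eigenvalue-commutator} gives $\|[A,\rho^s]\|^2=2\|X\|^2$. The coefficient equals $1/2$, so the right-hand side reproduces $\|X\|^2$ and the identity holds.

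As a cross-check, we can use the Bloch-sphere formulas. With $r=1$ we get $V_\rho^{\rm cl}(A)=(1-r^2)\cos^2\theta=0$, and by Eq.~\eqref{eq:qubit-s-bound} the commutator term is $\sin^2\theta$. This matches $V_\rho(A)=1-\cos^2\theta$.

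\emph{General observables.} Theorem~\ref{thm:universal-exactness} and the block argument already cover arbitrary $A$. For the explicit Bloch route, write $A=a_0\I+\bm a\cdot\bm\sigma$. Adding $a_0\I$ leaves all three terms unchanged, since $\mathcal P_\rho(\I)=\I$ and $[\I,\rho^s]=0$. Every term is quadratic under rescaling $\bm a$, and the case $\bm a=0$ is trivial.

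The main obstacle is the pure-state boundary, where the full-rank hypothesis fails. It is resolved by noting that the sufficiency proof is purely algebraic and remains valid with $\lambda_-=0$.
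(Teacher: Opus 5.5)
Your proof is correct and follows the paper's own structure. You use the same split by $r$: the pinching identity $\mathcal P_\rho(A)=A$ for $r=0$, Theorem~\ref{thm:universal-exactness} for $0<r<1$, and the $a_0\I+\bm a\cdot\bm\sigma$ reduction to extend to general observables. The only variation is at $r=1$: the paper reads the identity off the explicit Bloch-sphere values $V_\rho^{\rm cl}(A)=0$ and commutator term $\sin^2\theta$, whereas you rerun the two-block sufficiency argument with $\lambda_-=0$ and $\rho^s=P_+$, which is valid and covers arbitrary observables directly.
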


The same conclusion holds for the Wigner--Yanase--Dyson and quantum-Fisher-information families. For full-rank nonmaximally mixed qubit states, this follows directly from Theorem \ref{thm:universal-exactness}; the pure and maximally mixed cases follow from the explicit calculations above.

\begin{proposition}\label{prop:qubit-wyd-identity}
Let $\rho$ be any qubit state and let $A$ be any qubit observable.
For every $0<\alpha<1$,
\begin{equation}\label{eq:qubit-wyd-identity}
  V_\rho(A)
  =
  V_\rho^{\mathrm{cl}}(A)
  +
  \frac{1}{(\lM^\alpha-\lm^\alpha)(\lM^{1-\alpha}-\lm^{1-\alpha})}
  I_{\rho,\alpha}(A).
\end{equation}
For the maximally mixed state, the second term on the right-hand side is understood to be zero.
\end{proposition}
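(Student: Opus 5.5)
The identity in Proposition~\ref{prop:qubit-wyd-identity} will be proved by splitting into three cases according to the Bloch radius $r$ of $\rho=(\id+r\bm n\cdot\bm\sigma)/2$: full-rank nonmaximally mixed ($0<r<1$), pure ($r=1$), and maximally mixed ($r=0$). Throughout, $\lM+\lm=1$, so the coefficient in \eqref{eq:qubit-wyd-identity} is exactly the one in Theorem~\ref{thm:universal-exactness}(iii).

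For $0<r<1$, $\rho$ is full rank with exactly two distinct eigenvalues $(1\pm r)/2$, so Theorem~\ref{thm:universal-exactness}, (i)$\Rightarrow$(iii), gives \eqref{eq:qubit-wyd-identity} directly for every observable $A$. For $r=0$, $\rho=\I/2$ commutes with $A$, so $I_{\rho,\alpha}(A)=0$ and the second term is zero by convention. Moreover $\mathcal P_\rho(A)=A$, so Proposition~\ref{prop:classical-pinching} gives $V_\rho^{\rm cl}(A)=V_\rho(A)$.

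The pure case $r=1$ needs the most care, because $\lm=0$ and $\rho^\alpha$, $\rho^{1-\alpha}$ must be read as $\lambda^{\alpha}$ with $0^\alpha=0$. Here I would not rely on Theorem~\ref{thm:universal-exactness}, which assumes full rank. Instead I would rerun its sufficiency argument verbatim with $\lambda_-=0$ and $\lambda_+=1$. That argument used only the two-projector spectral decomposition \eqref{eq:two-ev-st}, the relation $X^2=0$ for $X=P_-AP_+$, and the functional calculus $\rho^\alpha=\lambda_-^\alpha P_-+\lambda_+^\alpha P_+$, all of which remain valid when $\lambda_-=0$. It then yields
\[
V_\rho(A)-V_\rho^{\rm cl}(A)=(\lambda_++\lambda_-)\|X\|^2=\|X\|^2
\]
together with
\[
I_{\rho,\alpha}(A)=(1^\alpha-0)(1^{1-\alpha}-0)\|X\|^2=\|X\|^2,
\]
and the coefficient in \eqref{eq:qubit-wyd-identity} equals $1$. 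Alternatively, the Bloch computation \eqref{eq:qubit-wyd-block-explicit} already covers $0<r\le1$ and gives $I_{\rho,\alpha}(A)=\sin^2\theta$ at $r=1$, while $V_\rho^{\rm cl}(A)=(1-r^2)\cos^2\theta=0$ and $V_\rho(A)=1-\cos^2\theta=\sin^2\theta$.

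This case analysis establishes the identity directly for every qubit observable. If one prefers to use the Bloch computation, which is written for spin observables $\bm a\cdot\bm\sigma$ with $\rnorm{\bm a}=1$, the extension to a general $A=a_0I+\bm a\cdot\bm\sigma$ goes as follows. All three terms $V_\rho(A)$, $V_\rho^{\rm cl}(A)$ and $I_{\rho,\alpha}(A)$ are invariant under $A\mapsto A+a_0I$. They are also homogeneous of degree two under $A\mapsto cA$, and $A=0$ is trivial. The main obstacle is only bookkeeping at the boundary $\lm=0$, namely justifying the convention $0^\alpha=0$ for $\rho^\alpha$ when $\rho$ is pure, and I would make that convention explicit.
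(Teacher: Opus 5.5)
Your proposal is correct and follows the paper's own argument: Theorem~\ref{thm:universal-exactness} for $0<r<1$, the observation that $\mathcal P_\rho(A)=A$ and the commutators vanish for $r=0$, an explicit computation for $r=1$, and the extension to general $A=a_0I+\bm a\cdot\bm\sigma$ by shift invariance and homogeneity. Your main treatment of the pure case reruns the two-block sufficiency computation with $\lambda_-=0$ instead of using the Bloch-sphere formulas. This is a valid minor variant, with the small advantage of covering every observable directly, without the normalization step.
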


\begin{proposition}\label{prop:qubit-qfi-identity}
Let $\rho$ be any qubit state and let $A$ be any qubit observable.
If $\rho$ is not maximally mixed, then
\begin{equation}\label{eq:qubit-qfi-identity}
  V_\rho(A)
  =
  V_\rho^{\mathrm{cl}}(A)
  +
  \frac{1}{(\lM-\lm)^2}\tilde{F}_{\rho}(A)
\end{equation}
For the maximally mixed state, the second term on the right-hand side is understood to be zero.
\end{proposition}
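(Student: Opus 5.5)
The plan is a case split on the Bloch radius $r$ of $\rho=(\id+r\bm n\cdot\bm\sigma)/2$, followed by a reduction from spin observables to arbitrary qubit observables. First I will rewrite the coefficient. For any qubit state, $\lM+\lm=1$. Hence the optimal coefficient $\bigl((\lM+\lm)/(\lM-\lm)\bigr)^2$ of Theorem~\ref{thm:QFI-extension} equals $1/(\lM-\lm)^2$, which is the coefficient displayed in Eq.~\eqref{eq:qubit-qfi-identity}. So the claim is precisely that the sharpened QFI relation \eqref{eq:QFI-extension} holds with equality for every qubit state and every qubit observable.

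\emph{Case $0<r<1$.} Here $\rho$ is full rank, not maximally mixed, and has exactly the two distinct eigenvalues $(1\pm r)/2$. The implication (i)$\Rightarrow$(iv) of Theorem~\ref{thm:universal-exactness} then gives the identity directly for every observable $A$, with no reduction to spin observables needed.

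\emph{Case $r=1$ (pure state).} Here $\lM=1$, $\lm=0$, so the coefficient is $1$, and Theorem~\ref{thm:universal-exactness} does not apply because it assumes full rank. I will argue directly.
\begin{itemize}
\item Write $\rho=|\psi\rangle\langle\psi|$ and $P=|\psi\rangle\langle\psi|$. By Proposition~\ref{prop:classical-pinching}, $V_\rho^{\rm cl}(A)=V_\rho(PAP+(I-P)A(I-P))$.
\item The pinched operator has $|\psi\rangle$ as an eigenvector with eigenvalue $\langle\psi|A|\psi\rangle$, so its variance in $\rho$ vanishes and $V_\rho^{\rm cl}(A)=0$.
\item From Eq.~\eqref{eq:QFI-def}, the only contributing pairs are $(\psi,\phi)$ and $(\phi,\psi)$, where $\phi\perp\psi$ spans the kernel. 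Each has weight $(1-0)^2/(1+0)=1$, so $\tilde F_\rho(A)=|\langle\phi|A|\psi\rangle|^2$.
\item This equals $\langle\psi|A(I-P)A|\psi\rangle=V_\rho(A)$, which gives the identity.
\end{itemize}
As a cross-check, for spin observables this matches the Bloch computations: $V_\rho^{\rm cl}=(1-r^2)\cos^2\theta=0$, $\tilde F_\rho=r^2\sin^2\theta=\sin^2\theta$, and $V_\rho=1-\cos^2\theta$.

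\emph{Case $r=0$ (maximally mixed).} Here $\mathcal P_\rho(A)=A$, so $V_\rho^{\rm cl}(A)=V_\rho(A)$ by Proposition~\ref{prop:classical-pinching}. Also $\tilde F_\rho(A)=0$, since all eigenvalues coincide. With the stated convention that the second term is zero, the identity holds.

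Because the pure-state argument above already covers a general $A$, no reduction step is strictly required. Even so, I would note the consistency check for $A=a_0I+\bm a\cdot\bm\sigma$. Both sides are invariant under $A\mapsto A+a_0I$ and scale by $|\bm a|^2$ under rescaling of $\bm a$, and the case $\bm a=0$ is trivial. So the spin-observable formulas of Eqs.~\eqref{eq:qubit-qfi-block-explicit}--\eqref{eq:qubit-qfi-opt-explicit} suffice in all cases.

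The only step I expect to need care is the pure-state case. It lies outside Theorem~\ref{thm:universal-exactness}, and Eq.~\eqref{eq:QFI-def} excludes the pair of kernel indices (where $\lambda_i+\lambda_j=0$). I must check that this excluded pair contributes nothing to $V_\rho(A)-V_\rho^{\rm cl}(A)$. This holds because the corresponding term in the decomposition \eqref{eq:variance-basis-decomp} has weight $\lambda_i+\lambda_j=0$.
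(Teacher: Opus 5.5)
Your proof is correct and follows the paper's route. You use the same split into $0<r<1$ (the implication (i)$\Rightarrow$(iv) of Theorem~\ref{thm:universal-exactness}, after noting $\lM+\lm=1$), $r=1$, and $r=0$ (where $\mathcal P_\rho(A)=A$ and $\tilde F_\rho(A)=0$). The only difference is the pure case: you handle arbitrary $A$ directly via the pinching and $\tilde F_\rho(A)=|\langle\phi|A|\psi\rangle|^2=V_\rho(A)$, whereas the paper reads it off the Bloch-sphere spin-observable formulas and extends to $A=a_0I+\bm a\cdot\bm\sigma$ by shift invariance and homogeneity---an equivalent, slightly more self-contained step.
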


The exactness found above follows directly from the two-eigenvalue structure of the state.
For a full-rank state with exactly two distinct eigenvalues, there is only one pair of eigenspaces associated with different eigenvalues. 
Consequently, every noncommuting contribution of an observable is carried by the same off-diagonal block connecting these two eigenspaces. 
Both the variance beyond its classical part and the commutator-based quantities are therefore determined by this single block.
This is why the sharpened inequalities become exact, regardless of the dimensions of the two eigenspaces. 
Qubit systems provide the simplest realization of this structure.

In qubit systems, Propositions~\ref{prop:qubit-identity}, \ref{prop:qubit-wyd-identity}, and~\ref{prop:qubit-qfi-identity} show that the sharpened power-commutator, Wigner--Yanase--Dyson, and quantum-Fisher-information relations are exact for every state.
They give a complete decomposition of the variance into a classical diagonal contribution and a noncommutative off-diagonal contribution.
The Bloch-sphere representation further shows how the unsharpened bounds behave and when they are saturated.


\section{Application to variance-product trade-offs and comparison in qubit systems}\label{sec:product}

Having established \emph{single-observable preparation uncertainty relations} as a self-contained framework, we now ask what they imply for conventional multi-observable trade-offs.

As a reference point, applying Luo's inequality to two observables $A$ and $B$ gives

\begin{equation}
V_\rho(A) V_\rho(B)
\ge
\frac{1}{4}
\|[A,\sqrt{\rho}]\|^2\|[B,\sqrt{\rho}]\|^2. 
\label{Luo2}
\end{equation}

On the other hand, dropping the nonnegative classical term from Eq.~\eqref{eq:mainUR}, together with the optimal coefficient \eqref{eq:optCoef}, yields the following product lower bound for $V_\rho(A)V_\rho(B)$:

\begin{align}
&
\frac{1}{4}
\frac{(\lM+\lm)^2}{(\lM^s-\lm^s)^4}
\|[A,\rho^s]\|^2\|[B,\rho^s]\|^2,
\label{Our1} 
\end{align}
and the sharper relation \eqref{eq:mainUR} gives the bound 
\begin{align}
&
\left(
  V_\rho^{\mathrm{cl}}(A)
  +
  \frac{\lM+\lm}
  {2(\lM^s-\lm^s)^2}
  \|[A,\rho^s]\|^2
\right) \nonumber \\
&\qquad\times
\left(
  V_\rho^{\mathrm{cl}}(B)
  +
  \frac{\lM+\lm}
  {2(\lM^s-\lm^s)^2}
  \|[B,\rho^s]\|^2
\right).
\label{Our1Sharp}
\end{align}
For the maximally mixed state, the power-commutator terms in Eqs.~\eqref{Our1} and \eqref{Our1Sharp} are understood to vanish, as in Theorem~\ref{thm:s-main}.

Similarly, Theorem~\ref{thm:wyd-extension} gives the
Wigner--Yanase--Dyson product-form bound
\begin{align}
V_\rho(A)V_\rho(B)
\ge&
\left[
V_\rho^{\mathrm{cl}}(A)
+
\frac{\lM+\lm}
{(\lM^\alpha-\lm^\alpha)(\lM^{1-\alpha}-\lm^{1-\alpha})}
I_{\rho,\alpha}(A)
\right]
\nonumber\\
&\times
\left[
V_\rho^{\mathrm{cl}}(B)
+
\frac{\lM+\lm}
{(\lM^\alpha-\lm^\alpha)(\lM^{1-\alpha}-\lm^{1-\alpha})}
I_{\rho,\alpha}(B)
\right].
\label{OurWYDSharp}
\end{align}
For the maximally mixed state, the Wigner--Yanase--Dyson terms
are understood to vanish, as in Theorem~\ref{thm:wyd-extension}.
Dropping the classical contributions gives
\begin{equation}
V_\rho(A)V_\rho(B)
\ge
\left[
\frac{\lM+\lm}
{(\lM^\alpha-\lm^\alpha)(\lM^{1-\alpha}-\lm^{1-\alpha})}
\right]^2
I_{\rho,\alpha}(A)I_{\rho,\alpha}(B).
\label{OurWYD}
\end{equation}

    Similarly, Theorem~\ref{thm:QFI-extension} gives the
quantum-Fisher-information product-form bound
\begin{align}
V_\rho(A)V_\rho(B)
\ge&
\left[
  V_\rho^{\mathrm{cl}}(A)
  +
  \left(
    \frac{\lM+\lm}{\lM-\lm}
  \right)^2
  \tilde{F}_\rho(A)
\right]
\nonumber\\
&\times
\left[
  V_\rho^{\mathrm{cl}}(B)
  +
  \left(
    \frac{\lM+\lm}{\lM-\lm}
  \right)^2
  \tilde{F}_\rho(B)
\right].
\label{OurQFISharp}
\end{align}
For the maximally mixed state, the quantum-Fisher-information terms
are understood to vanish, as in Theorem~\ref{thm:QFI-extension}.
Dropping the classical contributions gives
\begin{equation}
V_\rho(A)V_\rho(B)
\ge
\left(
\frac{\lM+\lm}{\lM-\lm}
\right)^4
\tilde{F}_\rho(A)\tilde{F}_\rho(B).
\label{OurQFI}
\end{equation}

These lower bounds are obtained by combining single-observable uncertainty estimates.
They do not explicitly use the noncommutativity between the observables themselves.
One might therefore expect them to be weaker than conventional trade-off relations.
The qubit case, however, shows that this expectation need not be correct. Remarkably, the sharpened bounds become exact.
Even the bound that retains only the noncommutative contributions is stronger than the Robertson bound for every pair of qubit spin observables.
The comparison with the Schr\"odinger bound is different.
Neither bound is always stronger for every pair.
For mixed qubit states, however, our bound is stronger after uniform averaging over all pairs of spin observables.

For comparison, and to make the distinction explicit, let us recall these conventional relations.

For two observables $A$ and $B$, Robertson's relation states that
\begin{equation}
  V_\rho(A)V_\rho(B)
  \ge
  \frac{1}{4}\abs{\langle [A,B]\rangle_\rho}^2,
  \label{Rob}
\end{equation}
where $\langle X\rangle_\rho:=\Tr(\rho X)$.
The refined Schr\"odinger relation is given by
\begin{equation}
  V_\rho(A)V_\rho(B)
  \ge
  \frac{1}{4}\abs{\langle [A,B]\rangle_\rho}^2
  +
  \mathrm{Cov}_\rho(A,B)^2,
  \label{Sch}
\end{equation}
where
\begin{equation}
  \mathrm{Cov}_\rho(A,B)
  :=
  \frac{1}{2}\langle \{A,B\}\rangle_\rho
  -
  \langle A\rangle_\rho\langle B\rangle_\rho.
\end{equation}
Here $\{A,B\}:=AB+BA$ denotes the anticommutator.

Since the comparison below is restricted to qubit systems, the optimized Wigner--Yanase--Dyson and quantum-Fisher-information bounds need not be treated separately.
As shown in the previous section, for every nonmaximally mixed qubit state, the optimized noncommutative contributions of the power-commutator, Wigner--Yanase--Dyson, and quantum-Fisher-information families all reduce to the same quantity.
Consequently, their corresponding product-form lower bounds also coincide for qubits.
We therefore use the power-commutator bound as a representative
of these three equivalent families in the comparison below.

For convenience, we introduce notation for the five lower bounds.
The right-hand sides of Eqs.~\eqref{Rob}, \eqref{Sch}, \eqref{Luo2}, \eqref{Our1}, and \eqref{Our1Sharp} are denoted by
\[
B_{\rm R},\qquad
B_{\rm S},\qquad
B_{\rm L},\qquad
B_\rho^{\rm opt},\qquad
B_\rho^{\rm sharp},
\]
respectively.

Let us now compare these bounds in the qubit case.
The previous section showed that the sharpened single-observable relations are exact identities for qubits.
This includes the classical contribution.

Consequently, the sharpened power-commutator, Wigner--Yanase--Dyson, and quantum-Fisher-information product bounds are all exact identities for the product of variances.
Indeed, this follows directly from Propositions~\ref{prop:qubit-identity}, \ref{prop:qubit-wyd-identity}, and~\ref{prop:qubit-qfi-identity}.
Once the classical contributions are retained, these product relations therefore leave no room for further improvement.

This exact attainability is in the same spirit as the corresponding results obtained in Hayashi's relation \cite{Hayashi2017} and in our previous work \cite{KMOLC}.

\emph{More surprisingly}, $B_\rho^{\rm opt}$ retains only the noncommutative contributions but is still no weaker than the Robertson bound.
This holds for every nonmaximally mixed qubit state and every pair of spin observables.
To see this explicitly, use rotational invariance to choose the Bloch vector along the $z$-axis:
\[
\rho=\frac{I+r\sigma_z}{2},
\qquad 0<r\le 1.
\]
Let
\[
{\bm a}=(a_1,a_2,a_3),\qquad
{\bm b}=(b_1,b_2,b_3),
\]
and
\[
A={\bm a}\cdot{\bm \sigma},
\qquad
B={\bm b}\cdot{\bm \sigma},
\qquad
|{\bm a}|=|{\bm b}|=1.
\]
Using the qubit expressions derived in Appendix~\ref{app:qubit-average}, the Robertson bound is
\begin{equation}
B_{\rm R}
=
r^2(a_1b_2-a_2b_1)^2,
\label{eq:BR-qubit}
\end{equation}
whereas the optimized single-observable bound \eqref{Our1} reduces to
\begin{equation}
B_\rho^{\rm opt}
=
(1-a_3^2)(1-b_3^2).
\label{eq:Bopt-qubit}
\end{equation}
Since $0<r\le 1$, the Cauchy--Schwarz inequality gives 
\begin{align} 
B_{\rm R} &= r^2(a_1b_2-a_2b_1)^2 \nonumber\\ 
&\le r^2(a_1^2+a_2^2)(b_1^2+b_2^2) \nonumber\\ 
&= r^2 B_\rho^{\rm opt} \le B_\rho^{\rm opt}. \label{eq:Robertson-pointwise-comparison} \end{align} 
Hence, 
\begin{equation} 
B_{\rm R} \le r^2 B_\rho^{\rm opt} \le B_\rho^{\rm opt} \label{eq:Robertson-pointwise} 
\end{equation} 
for every pair of qubit spin observables.
Thus $B_\rho^{\rm opt}$ already bounds the Robertson contribution from above.
This is notable because $B_\rho^{\rm opt}$ is obtained by applying a single-observable relation independently to $A$ and $B$.
The Robertson contribution instead depends explicitly on the two-observable commutator.
For $0<r<1$, the comparison is strict whenever $B_\rho^{\rm opt}>0$.

The situation is different for the Schr\"odinger bound.
Neither $B_\rho^{\rm opt}$ nor $B_{\rm S}$ is always stronger than the other for every pair of observables.
We therefore compare their overall strengths by averaging the bounds uniformly over all pairs of qubit spin observables.
Direct computation gives

\begin{align}
\langle B_{\rm R}\rangle_{\rm av}
&=
\frac{2}{9}(2P-1),
\label{eq:avg-R}\\
\langle B_{\rm S}\rangle_{\rm av}
&=
\frac{4}{9}(P^2-P+1),
\label{eq:avg-S}\\
\langle B_{\rm L}\rangle_{\rm av}
&=
\frac{4}{9}\left(1-\sqrt{2(1-P)}\right)^2,
\label{eq:avg-Luo}\\
\langle B_{\rho}^{\rm opt}\rangle _{\rm av}
&=
\frac{4}{9},\label{eq:avg-ours} \\
\langle B_{\rho}^{\rm sharp}\rangle_{\rm av}
&=
\frac{4(2-P)^2}{9}.
\label{eq:avg-ours-sharp}
\end{align}
 Here \(\langle \cdot \rangle_{\rm av}\) denotes the uniform average over two independent unit vectors ${\bm a},{\bm b}\in S^2$.
Equation~\eqref{eq:avg-ours} is stated for nonmaximally mixed qubit states with \(1/2<P\le1\).
The derivation of Eqs.~\eqref{eq:avg-R}--\eqref{eq:avg-ours-sharp} is given in Appendix~\ref{app:qubit-average}.

In particular, Eq.~\eqref{eq:avg-ours} shows that the averaged optimized bound is independent of the purity throughout \(1/2< P\le1\). 
By contrast, the Robertson contribution is suppressed by \(r^2=2P-1\).
Equation~\eqref{eq:Robertson-pointwise} shows this for every pair of qubit spin observables.
For mixed qubit states with \(1/2<P<1\), the averaged optimized bound also exceeds the corresponding averaged Schr\"odinger and Luo-type bounds.
This comparison is shown in Fig.~\ref{fig:comp}.
The advantage over the Robertson bound is therefore not merely an averaged effect.
The stronger relation \(B_{\rm R}\le r^2B_\rho^{\rm opt}\) holds for every pair, and the difference grows as the state becomes more mixed.

For completeness, we should also compare with relations that are already exact in the qubit setting. In particular, the Nagaoka--Hayashi relation~\cite{Nagaoka2005CRbound,Hayashi2017} and the relation~\cite{KMOLC} do not merely give
lower bounds for the product of variances for arbitrary states and
observables; in fact, they hold as equalities.  Hence, from the viewpoint
of product-of-variances relations alone, there is no room to improve these
relations further.

Nevertheless, the present approach reveals a distinction that is not apparent from conventional two-observable uncertainty relations alone.
The Robertson relation describes the uncertainty product through the state-dependent term
\(\frac{1}{4}\abs{\langle[A,B]\rangle_\rho}^2\).
This term depends jointly on $A$ and $B$.
For qubits, Eq.~\eqref{eq:Robertson-pointwise} shows that it is already controlled by the noncommutative contributions associated independently with $A$ and $B$.
The Robertson contribution is bounded by $r^2B_\rho^{\rm opt}$.
By contrast, $B_\rho^{\rm opt}$ itself is independent of $r$.
As the state becomes more mixed, the Robertson commutator term can therefore become small while the optimized single-observable contributions remain unchanged.
The single-observable formulation thus retains information about \emph{state--observable noncommutativity} that need not be visible in the conventional two-observable commutator term.

\begin{figure}[tb]
\centering
\IfFileExists{comp.png}{%
\includegraphics[width=1.0\linewidth]{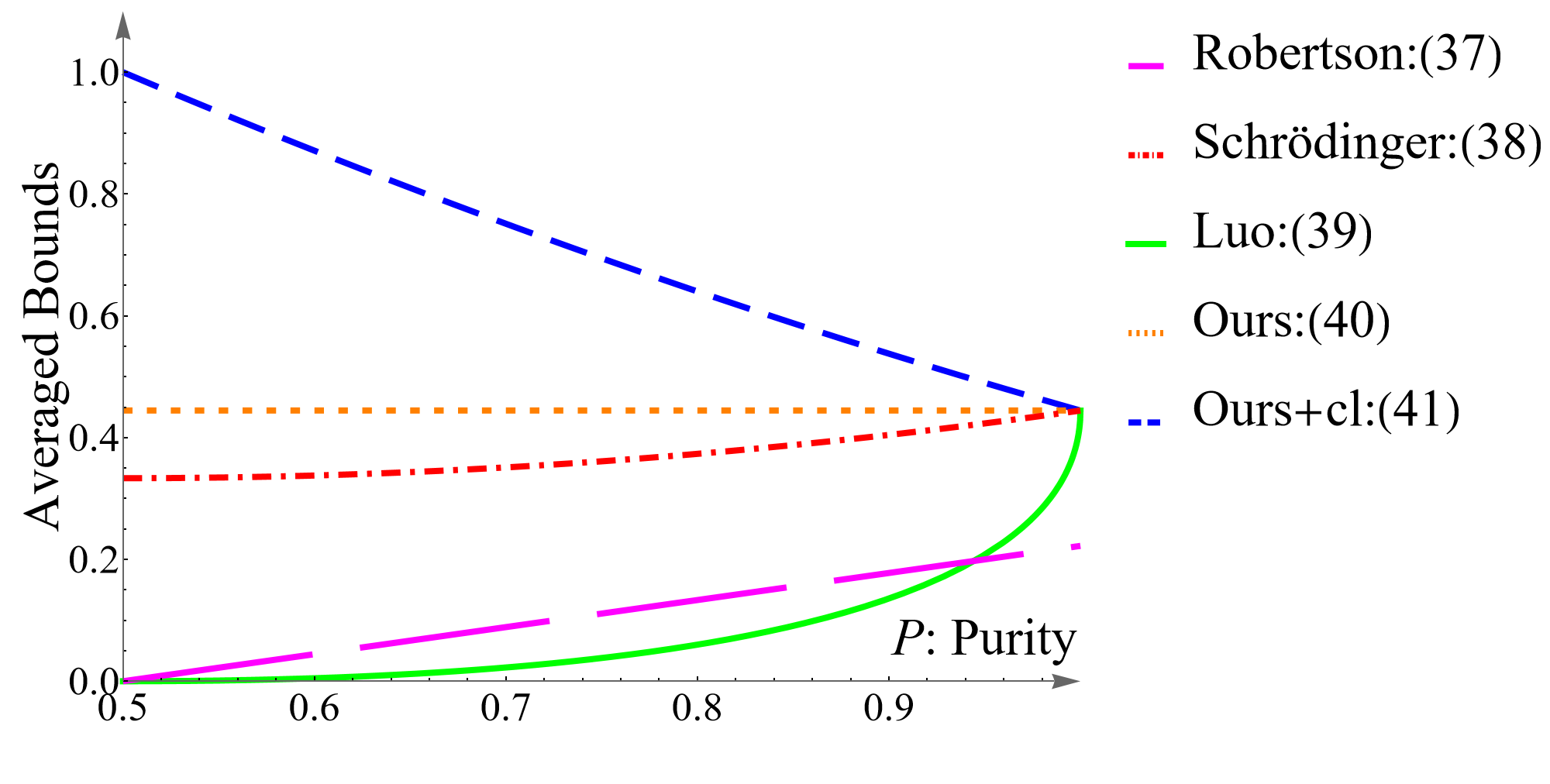}%
}{%
\fbox{\parbox{1.0\linewidth}{\centering comp.png}}%
}
\caption{
State purity versus the uniformly averaged lower bounds obtained from the
Robertson relation \eqref{Rob}, the Schr\"odinger relation \eqref{Sch},
the Luo-type relation \eqref{Luo2}, and the optimal relations
\eqref{Our1}, \eqref{OurWYD}, and \eqref{OurQFI}, together with their
sharpened counterparts \eqref{Our1Sharp}, \eqref{OurWYDSharp}, and
\eqref{OurQFISharp}.
For qubits, the three optimal relations coincide, and likewise the three
sharpened relations coincide.
The corresponding averaged expressions plotted here are given in
Eqs.~\eqref{eq:avg-R}--\eqref{eq:avg-ours-sharp}.
All curves are shown for $1/2<P\leq 1$.
}
\label{fig:comp}
\end{figure}

\section{Conclusion}\label{sec:conclusion}

We have introduced and developed optimal \emph{single-observable preparation uncertainty relations} as a framework complementary to conventional multi-observable trade-off relations.
In this framework, the uncertainty of an observable can be constrained without introducing a second observable; instead, the relevant incompatibility is that between the observable and the quantum state.
More specifically, for a fixed state $\rho$, we derived a one-parameter family of sharp lower bounds on $V_\rho(A)$ in terms of the power-commutators $[A,\rho^s]$, valid for all $s\ge1/2$.
Remarkably, the optimal coefficient is determined solely by the smallest and largest eigenvalues of $\rho$.
For $1/2\le s\le1$, the corresponding quantity $K_{\rho,s}(A)$ qualifies as a quantum uncertainty in the sense considered here, whereas for $s>1$ it is not convex in $\rho$ in general.
Nevertheless, the uncertainty relation itself continues to hold throughout the entire range $s\ge1/2$.
In particular, the case $s=1/2$ gives an optimal state-dependent improvement of Luo's relation based on the Wigner--Yanase skew information, while the case $s=1$ yields the corresponding optimal relation in terms of the direct commutator $[A,\rho]$.

The direct-commutator relation makes the role of state--observable incompatibility particularly transparent: nonzero $A$-coherence or $A$-asymmetry, as detected by $[A,\rho]$, necessarily implies nonzero uncertainty in $A$.
We also established optimal state-dependent bounds for the full Wigner--Yanase--Dyson skew-information family and for the normalized quantum Fisher information $\tilde F_\rho(A)$.
At $\alpha=1/2$, the Wigner--Yanase--Dyson bound reduces to the optimized Wigner--Yanase relation.
For the quantum Fisher information, the optimal coefficient multiplying $\tilde F_\rho(A)$ is
\begin{equation}
\left(\frac{\lM+\lm}{\lM-\lm}\right)^2.
\end{equation}
This result is particularly noteworthy because $\tilde F_\rho(A)$ is already the largest convex extension of the pure-state variance.
The stronger fixed-state bound becomes possible because the optimization is performed for a fixed $\rho$, and the resulting state-dependent rescaling is not required to remain within the same convex class.

Beyond these optimal multiplicative bounds, we further sharpened the relations by introducing the classical uncertainty $V_\rho^{\rm cl}(A)$ as the maximal contribution to the variance arising from the part of the observable compatible with the state.
We showed that this quantity admits a basis-independent representation through the pinching associated with the spectral decomposition of $\rho$.
The resulting uncertainty relations take the form
\begin{equation}
V_\rho(A)\ge V_\rho^{\rm cl}(A)+c_{\rm opt}(\rho)Q_\rho(A),
\end{equation}
with the same optimal coefficient $c_{\rm opt}(\rho)$ as in the corresponding multiplicative bound without the classical contribution.
Thus, the maximal classical contribution can be added without weakening the optimal noncommutative term.
This structure also makes explicit how the variance separates into contributions associated with the diagonal and off-diagonal parts of the observable relative to the state.

We further obtained a complete characterization of universal exactness.
Among full-rank nonmaximally mixed states, each of the sharpened power-commutator, Wigner--Yanase--Dyson, and quantum-Fisher-information relations is an identity for every observable if and only if the state has exactly two distinct eigenvalues.
Thus, the two-eigenvalue condition is not merely sufficient, but precisely characterizes universal exactness within this class of states.
The origin of this result lies in the eigenvalue structure.
When the state has only two distinct eigenvalues, every noncommuting contribution connects the same pair of eigenspaces and is therefore governed by the same extremal pair of eigenvalues.
By contrast, the presence of a third distinct eigenvalue allows observables to couple eigenspaces associated with a nonextremal pair, for which the underlying scalar estimate is strict.
Qubit systems provide the simplest realization of this structure, and, including the boundary cases, the sharpened relations become exact identities for every qubit state.

Finally, applying the single-observable relations separately to two observables yields corresponding product-form uncertainty relations.
For qubit systems, the resulting optimized noncommutative bound is no weaker than the Robertson bound for every pair of spin observables.
For mixed qubit states with $1/2<P<1$, its uniform average also exceeds the corresponding averaged Schr\"odinger and Luo-type bounds.

Taken together, these results provide a perspective on quantum uncertainty that complements the conventional trade-off approach.
They show that the uncertainty of a single observable already contains a rich structure determined by the state, its spectrum, and its incompatibility with that observable.
In particular, the extremal eigenvalues determine the strongest universal coefficient at fixed state, while the maximal compatible part of the observable supplies a classical contribution that can be retained without weakening this optimal quantum term.
For full-rank states with exactly two distinct eigenvalues, these ingredients are sufficient to determine the variance exactly for every observable.

\section*{Author contributions}
All authors contributed to the development of the work and to the preparation of the manuscript.
ChatGPT (OpenAI) was used during manuscript preparation for language editing, stylistic restructuring, consistency checks, drafting assistance, and limited bibliographic research.
The authors remain fully responsible for the mathematical derivations, scientific conclusions, and final manuscript.

\section*{Acknowledgments}

We thank Satoya Imai for bringing Ref.~\cite{TothPetz2013Extremal} to our attention.
G.K. was supported by JSPS KAKENHI Grant No. 24K06873.

\appendix
\section{Proof of Lemmas}\label{app:lemma}

Lemmas~\ref{lem:sq}, \ref{lem:wyd-scalar}, and~\ref{lem:qfi-scalar}
are all special cases of a single scalar inequality.

\begin{lemma}\label{lem:scalar-ieq}
Let \(0\le m<M\), let \(p,q,\gamma>0\), and assume that
\begin{equation}\label{eq:gen-condition}
p+q\ge\gamma.
\end{equation}
Then, for all \(x,y\in[m,M]\),
\begin{equation}\label{eq:general-scalar}
(x+y)^\gamma
\ge
\frac{(M+m)^\gamma}
{(M^p-m^p)(M^q-m^q)}
(x^p-y^p)(x^q-y^q).
\end{equation}
Moreover, if \(m>0\), \(x\neq y\), and
\(\{x,y\}\neq\{m,M\}\), then the inequality is strict.
\end{lemma}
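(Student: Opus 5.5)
Since both sides of \eqref{eq:general-scalar} are symmetric in $x,y$ and vanish when $x=y$, I will assume without loss of generality $m\le y<x\le M$. The plan is to normalize by $y$ and reduce everything to monotonicity in the ratio $t=x/y$, as hinted by the variables $t$ and $R=M/m$ used in the proof of Theorem~\ref{thm:universal-exactness}. The boundary case $y=0$ (possible only when $m=0$) is handled separately. There the right-hand side is $(M)^\gamma x^{p+q}/(M^{p+q})$, and the inequality becomes $x^\gamma\ge M^{\gamma-p-q}x^{p+q}$, i.e.\ $(x/M)^{p+q-\gamma}\le 1$. This holds by \eqref{eq:gen-condition}.

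For $0<y<x$, I would rewrite the inequality in the factorized form
\begin{equation}\label{eq:gen-rewrite}
\left(\frac{t}{1+t}\right)^{-\gamma}\ge\frac{(M+m)^\gamma}{(M^p-m^p)(M^q-m^q)}\,x^{p+q-\gamma}\,(1-t^{-p})(1-t^{-q}),
\end{equation}
obtained by writing $x+y=x(1+t^{-1})$ and $x^p-y^p=x^p(1-t^{-p})$, and similarly for $q$. The functions $t/(1+t)$, $1-t^{-p}$ and $1-t^{-q}$ are positive and strictly increasing for $t>1$. The factor $x^{p+q-\gamma}$ is nondecreasing in $x$ because $p+q\ge\gamma$.

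The key step is to control the right-hand side jointly in $(x,t)$ subject to $x\le M$ and $t=x/y\le x/m$. I would first use $x\le M$ to bound $x^{p+q-\gamma}\le M^{p+q-\gamma}$. Then the product $\bigl(t/(1+t)\bigr)^{\gamma}(1-t^{-p})(1-t^{-q})$ is increasing in $t$, so for $m>0$ it is bounded by its value at $t\le R=M/m$. When $m=0$, the constraint on $t$ disappears and the bound is the limit $1$ as $t\to\infty$. Evaluating at $t=R$ with $x=M$ gives exactly
\[
\frac{(M+m)^\gamma}{(M^p-m^p)(M^q-m^q)}\,M^{p+q-\gamma}\Bigl(\frac{R}{1+R}\Bigr)^{\gamma}(1-R^{-p})(1-R^{-q})=1,
\]
which yields \eqref{eq:general-scalar}. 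I expect the main obstacle to be noticing that the two constraints can be decoupled. Since $t\le x/m\le R$, the maximization over $t$ and over $x$ can be done separately, each at its endpoint, with no cross interaction.

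For strictness with $m>0$ and $x\ne y$, equality forces equality in both steps. Strict monotonicity in $t$ forces $t=R$, which means $x=M$ and $y=m$ simultaneously, since $t=R$ requires $x/y=M/m$ with $x\le M$ and $y\ge m$. That case is excluded by $\{x,y\}\ne\{m,M\}$, so the inequality is strict. Finally, Lemmas~\ref{lem:sq}, \ref{lem:wyd-scalar} and \ref{lem:qfi-scalar} follow by taking $(p,q,\gamma)=(s,s,1)$, $(\alpha,1-\alpha,1)$ and $(1,1,2)$, respectively. For the last choice one also divides by $x+y>0$. In each case \eqref{eq:gen-condition} is satisfied.
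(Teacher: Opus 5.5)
Your proposal is correct and follows essentially the same route as the paper: you reduce to $y<x$, factor the ratio into $x^{p+q-\gamma}$ times the increasing function $G(t)=\bigl(t/(1+t)\bigr)^{\gamma}(1-t^{-p})(1-t^{-q})$ of $t=x/y$, bound the two factors separately by their values at $x=M$ and $t=R=M/m$, and get strictness from $t=R\Leftrightarrow(x,y)=(M,m)$. The one difference is the case $m=0$: you handle it directly, via $G<1$ and the explicit $y=0$ computation, where the paper takes the limit $m'\to0^+$, and your version is slightly cleaner. A small wording slip: when $x=y$ only the right-hand side vanishes, not both sides.
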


\begin{proof}
We first prove the result for \(m>0\).

Because both sides of Eq.~\eqref{eq:general-scalar} are invariant under
the exchange \(x\leftrightarrow y\), it is enough to consider
\begin{equation}\label{eq:gen-order}
x\ge y.
\end{equation}

If \(x=y\), then
\begin{equation}
(x^p-y^p)(x^q-y^q)=0,
\end{equation}
whereas
\[
(x+y)^\gamma\ge0.
\]
Hence Eq.~\eqref{eq:general-scalar} is immediate.
We therefore assume from now on that
\begin{equation}\label{eq:gen-strict-order}
x>y>0.
\end{equation}

Introduce the dimensionless ratios
\begin{equation}\label{eq:gen-ratios}
t:=\frac{x}{y},
\qquad
R:=\frac{M}{m}.
\end{equation}
Since
\[
m\le y<x\le M,
\]
we have
\begin{equation}\label{eq:gen-ratio-range}
1<t\le R.
\end{equation}
Also set
\begin{equation}
a:=p+q.
\end{equation}
By assumption~\eqref{eq:gen-condition},
\begin{equation}\label{eq:gen-a-gamma}
a-\gamma\ge0.
\end{equation}

We now rewrite the ratio appearing in
Eq.~\eqref{eq:general-scalar}.
Since \(t=x/y\), we have \(y=x/t\), and therefore
\begin{align}
x^p-y^p
&=
x^p-\left(\frac{x}{t}\right)^p
=
x^p\left(1-t^{-p}\right),
\\
x^q-y^q
&=
x^q-\left(\frac{x}{t}\right)^q
=
x^q\left(1-t^{-q}\right),
\\
x+y
&=
x+\frac{x}{t}
=
x\left(1+t^{-1}\right)
=
x\frac{1+t}{t}.
\end{align}
Substituting these expressions gives
\begin{align}
\frac{(x^p-y^p)(x^q-y^q)}
     {(x+y)^\gamma}
&=
\frac{
x^{p+q}
\left(1-t^{-p}\right)
\left(1-t^{-q}\right)
}{
x^\gamma
\left(\frac{1+t}{t}\right)^\gamma
}
\nonumber\\
&=
x^{a-\gamma}
\left(\frac{t}{1+t}\right)^\gamma
\left(1-t^{-p}\right)
\left(1-t^{-q}\right).
\label{eq:gen-rewrite}
\end{align}

We estimate the two parts on the right-hand side of
Eq.~\eqref{eq:gen-rewrite} separately.

First, since \(a-\gamma\ge0\) by
Eq.~\eqref{eq:gen-a-gamma}, the function
\(u\mapsto u^{a-\gamma}\) is nondecreasing on \((0,\infty)\).
Together with \(x\le M\), this gives
\begin{equation}\label{eq:gen-x-bound}
x^{a-\gamma}\le M^{a-\gamma}.
\end{equation}
Notice that when \(a=\gamma\), both sides of
Eq.~\eqref{eq:gen-x-bound} are equal to \(1\).
When \(a>\gamma\), the inequality is strict unless \(x=M\).

Next, define
\begin{equation}\label{eq:gen-G-def}
G(t):=
\left(\frac{t}{1+t}\right)^\gamma
\left(1-t^{-p}\right)
\left(1-t^{-q}\right),
\qquad t>1.
\end{equation}
We show explicitly that \(G\) is strictly increasing on \((1,\infty)\).

For \(t>1\),
\begin{align}
\frac{d}{dt}\left(\frac{t}{1+t}\right)
&=
\frac{1}{(1+t)^2}
>0,
\\
\frac{d}{dt}\left(1-t^{-p}\right)
&=
p\,t^{-p-1}
>0,
\\
\frac{d}{dt}\left(1-t^{-q}\right)
&=
q\,t^{-q-1}
>0.
\end{align}
Hence
\[
\frac{t}{1+t},
\qquad
1-t^{-p},
\qquad
1-t^{-q}
\]
are all strictly increasing on \(t>1\).
They are also strictly positive there.
Since \(\gamma>0\),
\[
\left(\frac{t}{1+t}\right)^\gamma
\]
is likewise strictly positive and strictly increasing.

Since \(1<t\le R\), we consequently obtain
\begin{align}
&\left(\frac{t}{1+t}\right)^\gamma
\left(1-t^{-p}\right)
\left(1-t^{-q}\right)
\nonumber\\
&\qquad\le
\left(\frac{R}{1+R}\right)^\gamma
\left(1-R^{-p}\right)
\left(1-R^{-q}\right).
\label{eq:gen-t-bound}
\end{align}
Moreover, because \(G\) is strictly increasing, the inequality in
Eq.~\eqref{eq:gen-t-bound} is strict whenever
\begin{equation}\label{eq:gen-strict-t}
1<t<R.
\end{equation}

Combining Eqs.~\eqref{eq:gen-rewrite},
\eqref{eq:gen-x-bound}, and~\eqref{eq:gen-t-bound}, we obtain
\begin{align}
\frac{(x^p-y^p)(x^q-y^q)}
     {(x+y)^\gamma}
&\le
M^{a-\gamma}
\left(\frac{R}{1+R}\right)^\gamma
\left(1-R^{-p}\right)
\left(1-R^{-q}\right).
\label{eq:gen-before-endpoint}
\end{align}

It remains to rewrite the right-hand side entirely in terms of \(m\)
and \(M\).
Using \(R=M/m\) and \(a=p+q\), we have
\begin{align}
\frac{R}{1+R}
&=
\frac{M/m}{1+M/m}
=
\frac{M}{M+m},
\\
1-R^{-p}
&=
1-\left(\frac{m}{M}\right)^p
=
\frac{M^p-m^p}{M^p},
\\
1-R^{-q}
&=
1-\left(\frac{m}{M}\right)^q
=
\frac{M^q-m^q}{M^q}.
\end{align}
Hence
\begin{align}
&M^{a-\gamma}
\left(\frac{R}{1+R}\right)^\gamma
\left(1-R^{-p}\right)
\left(1-R^{-q}\right)
\nonumber\\
&\quad=
M^{p+q-\gamma}
\left(\frac{M}{M+m}\right)^\gamma
\frac{M^p-m^p}{M^p}
\frac{M^q-m^q}{M^q}
\nonumber\\
&\quad=
M^{p+q-\gamma}
\frac{M^\gamma}{(M+m)^\gamma}
\frac{(M^p-m^p)(M^q-m^q)}{M^{p+q}}
\nonumber\\
&\quad=
\frac{(M^p-m^p)(M^q-m^q)}
     {(M+m)^\gamma}.
\label{eq:gen-endpoint-rewrite}
\end{align}
In the last equality, the powers of \(M\) cancel because
\[
M^{p+q-\gamma}M^\gamma M^{-(p+q)}=1.
\]

Substituting Eq.~\eqref{eq:gen-endpoint-rewrite} into
Eq.~\eqref{eq:gen-before-endpoint} yields
\begin{equation}\label{eq:gen-ratio-final}
\frac{(x^p-y^p)(x^q-y^q)}
     {(x+y)^\gamma}
\le
\frac{(M^p-m^p)(M^q-m^q)}
     {(M+m)^\gamma}.
\end{equation}
Because \(x+y>0\), multiplication by \((x+y)^\gamma\) gives
\[
(x^p-y^p)(x^q-y^q)
\le
\frac{(M^p-m^p)(M^q-m^q)}
     {(M+m)^\gamma}
(x+y)^\gamma.
\]
Since \(M>m\ge0\) and \(p,q>0\),
\[
M^p-m^p>0,
\qquad
M^q-m^q>0.
\]
Therefore we may divide by their product and rearrange to obtain
precisely
\[
(x+y)^\gamma
\ge
\frac{(M+m)^\gamma}
     {(M^p-m^p)(M^q-m^q)}
(x^p-y^p)(x^q-y^q).
\]
This proves Eq.~\eqref{eq:general-scalar} for \(m>0\).

We now record the strictness property, since it will be used in the
proof of Theorem~\ref{thm:universal-exactness}.
Suppose \(m>0\) and \(x>y\).
If \(t=R\), then
\[
\frac{x}{y}=\frac{M}{m},
\]
or equivalently,
\[
xm=My.
\]
Since \(x\le M\), this implies
\[
My=xm\le Mm,
\]
and hence \(y\le m\).
But \(y\ge m\), so necessarily
\[
y=m.
\]
Substituting this back into \(x/y=M/m\) gives
\[
x=M.
\]
Thus
\begin{equation}\label{eq:tR-endpoints}
t=R
\quad\Longleftrightarrow\quad
(x,y)=(M,m)
\end{equation}
under the ordering \(x>y\).

Consequently, if \(x>y\) and
\[
(x,y)\neq(M,m),
\]
then necessarily \(1<t<R\).
Since \(G\) is strictly increasing, Eq.~\eqref{eq:gen-t-bound} is then
strict. It follows that Eq.~\eqref{eq:gen-ratio-final}, and hence
Eq.~\eqref{eq:general-scalar}, is also strict.
By symmetry under \(x\leftrightarrow y\), the same statement holds
when \(y>x\).
Therefore, for \(m>0\) and \(x\neq y\), equality can occur only when
\[
\{x,y\}=\{m,M\}.
\]
In particular, every distinct pair other than \(\{m,M\}\) gives a
strict inequality.

We finally treat the case \(m=0\).

Assume first that \(x>0\) and \(y>0\).
Choose any \(m'\) satisfying
\begin{equation}
0<m'\le\min\{x,y\}.
\end{equation}
Then
\[
x,y\in[m',M],
\]
so the result already proved for a positive lower endpoint applies:
\begin{equation}\label{eq:gen-mprime}
(x+y)^\gamma
\ge
\frac{(M+m')^\gamma}
     {(M^p-(m')^p)(M^q-(m')^q)}
(x^p-y^p)(x^q-y^q).
\end{equation}
Now let \(m'\to0^+\).
Since \(p,q,\gamma>0\),
\begin{align}
(M+m')^\gamma
&\longrightarrow M^\gamma,
\\
M^p-(m')^p
&\longrightarrow M^p,
\\
M^q-(m')^q
&\longrightarrow M^q.
\end{align}
The denominator therefore converges to
\[
M^{p+q}>0.
\]
Hence, by continuity, Eq.~\eqref{eq:gen-mprime} gives
\begin{equation}
(x+y)^\gamma
\ge
\frac{M^\gamma}{M^pM^q}
(x^p-y^p)(x^q-y^q),
\end{equation}
which is exactly Eq.~\eqref{eq:general-scalar} with \(m=0\).

If \(x=0\) or \(y=0\), the same conclusion follows by taking a
sequence of positive values converging to the vanishing variable and
using continuity of both sides of Eq.~\eqref{eq:general-scalar}.
Thus Eq.~\eqref{eq:general-scalar} holds for all \(0\le m<M\).
\end{proof}

We now recover the three scalar lemmas used in the main text.

First, set
\begin{equation}
p=q=s,
\qquad
\gamma=1,
\end{equation}
with \(s\ge1/2\).
Then
\[
p+q=2s\ge1=\gamma,
\]
so all assumptions of Lemma~\ref{lem:scalar-ieq} are satisfied.
Equation~\eqref{eq:general-scalar} therefore gives
\begin{equation}
x+y
\ge
\frac{M+m}{(M^s-m^s)^2}
(x^s-y^s)^2,
\end{equation}
which is Lemma~\ref{lem:sq}.

Second, set
\begin{equation}
p=\alpha,
\qquad
q=1-\alpha,
\qquad
\gamma=1,
\end{equation}
with \(0<\alpha<1\).
Then \(p,q>0\) and
\[
p+q
=
\alpha+(1-\alpha)
=
1
=
\gamma.
\]
Hence Lemma~\ref{lem:scalar-ieq} gives
\begin{equation}
x+y
\ge
\frac{M+m}
{(M^\alpha-m^\alpha)
 (M^{1-\alpha}-m^{1-\alpha})}
(x^\alpha-y^\alpha)
(x^{1-\alpha}-y^{1-\alpha}),
\end{equation}
which is Lemma~\ref{lem:wyd-scalar}.

Third, set
\begin{equation}
p=q=1,
\qquad
\gamma=2.
\end{equation}
Then
\[
p+q=2=\gamma,
\]
and Lemma~\ref{lem:scalar-ieq} gives
\begin{equation}
(x+y)^2
\ge
\frac{(M+m)^2}{(M-m)^2}
(x-y)^2.
\label{eq:qfi-from-general}
\end{equation}
Whenever \(x+y>0\), division by \(x+y\) yields
\begin{equation}
x+y
\ge
\left(\frac{M+m}{M-m}\right)^2
\frac{(x-y)^2}{x+y},
\end{equation}
which is Lemma~\ref{lem:qfi-scalar}.

In this way, the scalar estimates underlying Theorems~\ref{thm:s-main}, \ref{thm:wyd-extension}, and~\ref{thm:QFI-extension} are all obtained from the same inequality.
The strictness statement in Lemma~\ref{lem:scalar-ieq} further shows that, when \(m>0\), every distinct pair of points in \([m,M]\) other than \(\{m,M\}\) gives a strict inequality.
This fact will be used in the proof of the necessity part of
Theorem~\ref{thm:universal-exactness}.

\section{Uniform averages on the sphere and averaged lower bounds}
\label{app:qubit-average}

In this appendix, we derive the averaged lower bounds used in
Sec.~\ref{sec:product}.  We first recall a formula for the uniform average on
the unit sphere.

Let $d\mu$ be the normalized uniform measure on
$S^{d-1}\subset\mathbb{R}^d$.  For a function $f$ on $S^{d-1}$, we
write
\begin{equation}
  \langle f(\bm x)\rangle_{\rm av}
  :=
  \int_{S^{d-1}} d\mu\, f(\bm x).
  \label{eq:app-spherical-average}
\end{equation}
For $\bm x=(x_1,\ldots,x_d)\in S^{d-1}$, we have
\begin{equation}
  \langle x_jx_k\rangle_{\rm av}
  =
  \int_{S^{d-1}} d\mu\, x_jx_k
  =
  \frac{\delta_{jk}}{d}.
  \label{eq:app-spherical-second-moment}
\end{equation}
By rotational
symmetry, $\int d\mu\, x_i^2$ is independent of $i$, and
$\int d\mu\, x_i x_j$ is independent of the pair $i\ne j$.  Put
\begin{equation}
  c:=\int_{S^{d-1}} d\mu\, x_i^2,
  \qquad
  c':=\int_{S^{d-1}} d\mu\, x_i x_j \quad (i\ne j).\nonumber
\end{equation}
Since $\sum_i x_i^2=1$ on $S^{d-1}$, we have
\begin{equation}
  1
  =
  \int_{S^{d-1}} d\mu\, \sum_{i=1}^d x_i^2
  =
  \sum_{i=1}^d \int_{S^{d-1}} d\mu\, x_i^2
  =
  dc.\nonumber
\end{equation}
Thus $c=1/d$.  To determine $c'$, observe that
\begin{align}
  \int_{S^{d-1}} d\mu\,
  \left(\sum_{i=1}^d x_i\right)^2
  &=
  \sum_{i=1}^d \int_{S^{d-1}} d\mu\, x_i^2
  +
  \sum_{i\ne j}\int_{S^{d-1}} d\mu\, x_i x_j       \nonumber\\
  &=
  1+d(d-1)c'.
\end{align}
On the other hand, by choosing a new orthonormal coordinate system
$\bm x'=(x'_1,\ldots,x'_d)$ such that
\begin{equation}
  x'_1=\frac{1}{\sqrt d}\sum_{i=1}^d x_i,
\end{equation}
we obtain
\begin{equation}
  \int_{S^{d-1}} d\mu\,
  \left(\sum_{i=1}^d x_i\right)^2
  =
  \int_{S^{d-1}} d\mu\, d(x'_1)^2
  =
  d\frac{1}{d}
  =
  1.
\end{equation}
Hence $c'=0$, and Eq.~\eqref{eq:app-spherical-second-moment}
follows.

We now apply Eq.~\eqref{eq:app-spherical-second-moment} to the
averaging of qubit uncertainty bounds.  Let
\begin{equation}
  A=\bm a\cdot\bm\sigma,\qquad
  B=\bm b\cdot\bm\sigma,
  \qquad
  |\bm a|=|\bm b|=1,\nonumber
\end{equation}
where $\bm a,\bm b\in S^2$ are independently and uniformly
distributed.  By rotational invariance, we may take the Bloch vector
of the state to be in the $z$-direction:
\begin{equation}
  \rho=\frac{I+r\sigma_z}{2},
  \qquad
  0\le r\le1 .
\end{equation}
The purity is
\begin{equation}\label{eq:Bloch-purity}
  P:=\Tr\rho^2
  =
  \frac{1+r^2}{2},
  \qquad
  r^2=2P-1 .
\end{equation}
In what follows, the average is taken independently over
$\bm a$ and $\bm b$.  From Eq.~\eqref{eq:app-spherical-second-moment}
with $d=3$, we use
\begin{equation}
  \langle a_i\rangle_{\rm av}=0,
  \qquad
  \langle a_ia_j\rangle_{\rm av}
  =
  \frac{\delta_{ij}}{3},
\end{equation}
and the same relations for $\bm b$.

Before computing the averaged lower bounds, let us first evaluate the
averaged variance product itself.  Since Eq.~\eqref{eq:mainURQubit} is an
identity in qubit systems, the sharper product bound \eqref{Our1Sharp}
coincides with $V_\rho(A)V_\rho(B)$.  Thus, its averaged value is obtained
by averaging the variance product itself.

Since $A=\bm a\cdot\bm\sigma$ and $B=\bm b\cdot\bm\sigma$ are normalized observables, we have $A^2=B^2=I$.  Hence
\begin{equation}
  V_\rho(A)
  =
  \langle A^2\rangle_\rho-\langle A\rangle_\rho^2
  =
  1-r^2a_3^2,
  \qquad
  V_\rho(B)
  =
  1-r^2b_3^2 .
\end{equation}
Therefore, using the independence of $\bm a$ and $\bm b$, we obtain
\begin{align}
  \left\langle
  V_\rho(A)V_\rho(B)
  \right\rangle_{\rm av}
  &=
  \left\langle
  (1-r^2a_3^2)(1-r^2b_3^2)
  \right\rangle_{\rm av}                                  \nonumber\\
  &=1-r^2\langle a_3^2\rangle_{\rm av}
  -r^2\langle b_3^2\rangle_{\rm av}
  +r^4
  \langle a_3^2\rangle_{\rm av}
  \langle b_3^2\rangle_{\rm av}                           \nonumber\\
  &=1-\frac{2r^2}{3}+\frac{r^4}{9}                           \nonumber\\
  &=\left(1-\frac{r^2}{3}\right)^2  =\frac{(3-r^2)^2}{9}.
\end{align}
By~\eqref{eq:Bloch-purity}, this can be written as
\begin{equation}
  \left\langle
  V_\rho(A)V_\rho(B)
  \right\rangle_{\rm av}
  =
  \frac{(3-(2P-1))^2}{9}
  =
  \frac{4(2-P)^2}{9}.
  \label{eq:app-avg-variance-product}
\end{equation}
In particular, this averaged variance product decreases from $1$ for the maximally mixed state $P=1/2$ to $4/9$ for pure states $P=1$. 
Since our bound \eqref{Our1Sharp} coincides with the variance in the qubit case, the average $\langle B_{\rho}^{\rm sharp}\rangle_{\rm av}$ in \eqref{eq:avg-ours-sharp} also coincides with \eqref{eq:app-avg-variance-product}.

In what follows, we compute the averaged bounds for the Robertson, Schr\"odinger, Luo, and our bound \eqref{Our1}, in this order. 

We start with the Robertson bound.
Since
\begin{equation}
  [\bm a\cdot\bm\sigma,\bm b\cdot\bm\sigma]
  =
  2i(\bm a\times\bm b)\cdot\bm\sigma,\nonumber
\end{equation}
we have
\begin{equation}
  \langle[A,B]\rangle_\rho
  =
  2ir\,(\bm a\times\bm b)_3
  =
  2ir(a_1b_2-a_2b_1).
\end{equation}
Therefore
\begin{equation}
  B_{\rm R}
  :=
  \frac{1}{4}|\langle[A,B]\rangle_\rho|^2
  =
  r^2(a_1b_2-a_2b_1)^2 .
\end{equation}
Taking the uniform average, we obtain
\begin{align}
  \langle B_{\rm R}\rangle_{\rm av}
  &=
  r^2
  \left\langle
  (a_1b_2-a_2b_1)^2
  \right\rangle_{\rm av}                                      \nonumber\\
  &=
  r^2
  \left(
    \langle a_1^2\rangle_{\rm av}
    \langle b_2^2\rangle_{\rm av}
    +
    \langle a_2^2\rangle_{\rm av}
    \langle b_1^2\rangle_{\rm av}
  \right)                                                       \nonumber\\
  &=
  \frac{2}{9}r^2
  =
  \frac{2}{9}(2P-1).
  \label{eq:app-avg-robertson}
\end{align}

Next, we compute the Schr\"odinger bound.  We have
\begin{equation}
  \langle A\rangle_\rho=ra_3,
  \qquad
  \langle B\rangle_\rho=rb_3,\nonumber
\end{equation}
and
\begin{equation}
  \frac{1}{2}\langle\{A,B\}\rangle_\rho
  =
  \bm a\cdot\bm b .
\end{equation}
Hence
\begin{align}
  {\rm Cov}_\rho(A,B)
  &=
  \frac{1}{2}\langle\{A,B\}\rangle_\rho
  -
  \langle A\rangle_\rho\langle B\rangle_\rho       \nonumber\\
  &=
  a_1b_1+a_2b_2+(1-r^2)a_3b_3 .
\end{align}
Using Eq.~\eqref{eq:app-spherical-second-moment}, we find
\begin{align}
  \left\langle
  {\rm Cov}_\rho(A,B)^2
  \right\rangle_{\rm av}
  &=
  \frac{1}{9}
  +
  \frac{1}{9}
  +
  \frac{(1-r^2)^2}{9}                                      \nonumber\\
  &=
  \frac{2+(1-r^2)^2}{9}.
\end{align}
Combining this with Eq.~\eqref{eq:app-avg-robertson}, the averaged
Schr\"odinger bound is
\begin{align}
  \langle B_{\rm S}\rangle_{\rm av}
  &:=
  \left\langle
  \frac{1}{4}|\langle[A,B]\rangle_\rho|^2
  +
  {\rm Cov}_\rho(A,B)^2
  \right\rangle_{\rm av}                                  \nonumber\\
  &=
  \frac{2r^2}{9}
  +
  \frac{2+(1-r^2)^2}{9}                                   \nonumber\\
  &=
  \frac{3+r^4}{9}  =
  \frac{4}{9}(P^2-P+1).
  \label{eq:app-avg-schrodinger}
\end{align}

We turn to the Luo bound.  The eigenvalues of $\rho$ are
\begin{equation}
  \lambda_\pm=\frac{1\pm r}{2}.
\end{equation}
In the eigenbasis of $\rho$, the off-diagonal matrix elements of
$A$ and $B$ satisfy
\begin{equation}
  |A_{12}|^2=a_1^2+a_2^2=1-a_3^2,
  \qquad
  |B_{12}|^2=b_1^2+b_2^2=1-b_3^2.
\end{equation}
Moreover,
\begin{equation}
  \|[A,\sqrt{\rho}]\|^2
  =
  2(\sqrt{\lambda_+}-\sqrt{\lambda_-})^2|A_{12}|^2 .
\end{equation}
Therefore Luo's single-observable estimate gives the product bound
\begin{align}
  B_{\rm L}
  &:=
  \frac{1}{4}
  \|[A,\sqrt{\rho}]\|^2
  \|[B,\sqrt{\rho}]\|^2                              \nonumber\\
  &=
  (\sqrt{\lambda_+}-\sqrt{\lambda_-})^4
  |A_{12}|^2|B_{12}|^2 .
\end{align}
Since
\begin{equation}
  (\sqrt{\lambda_+}-\sqrt{\lambda_-})^2
  =
  1-\sqrt{1-r^2},
\end{equation}
we obtain
\begin{align}
  \langle B_{\rm L}\rangle_{\rm av}
  &=
  (1-\sqrt{1-r^2})^2
  \left\langle
  (1-a_3^2)(1-b_3^2)
  \right\rangle_{\rm av}                                \nonumber\\
  &=
  (1-\sqrt{1-r^2})^2
  \left(1-\frac{1}{3}\right)^2                          \nonumber\\
  &=
  \frac{4}{9}(1-\sqrt{1-r^2})^2                         \nonumber\\
  &=
  \frac{4}{9}\left(1-\sqrt{2(1-P)}\right)^2 .
  \label{eq:app-avg-luo}
\end{align}

Finally, we compute the averaged values of the optimal bound \eqref{Our1}. 
Noting that
\begin{equation}
  \|[A,\rho^s]\|^2
  =
  2(\lambda_+^s-\lambda_-^s)^2|A_{12}|^2,
\end{equation}
the product lower bound reads
\begin{align}
  B_{\rho}^{\rm opt}
  &:=
  \frac{1}{4}
  \frac{(\lambda_++\lambda_-)^2}
       {(\lambda_+^s-\lambda_-^s)^4}
  \|[A,\rho^s]\|^2
  \|[B,\rho^s]\|^2                              \nonumber\\
  &=
  (\lambda_++\lambda_-)^2
  |A_{12}|^2|B_{12}|^2  \nonumber \\
  &=
  |A_{12}|^2|B_{12}|^2 = (1-a_3^2)(1-b_3^2),
\end{align}
which is independent of the parameter $s$.
Taking the average, we obtain
\begin{align}
  \langle B_{\rho}^{\rm opt}\rangle_{\rm av}
  &=
  \left\langle
  (1-a_3^2)(1-b_3^2)
  \right\rangle_{\rm av}                                \nonumber\\
  &=
  \left(1-\frac{1}{3}\right)^2  =\frac{4}{9}.
  \label{eq:app-avg-optimal}
\end{align}

Equations~\eqref{eq:app-avg-variance-product}, \eqref{eq:app-avg-robertson},
\eqref{eq:app-avg-schrodinger}, \eqref{eq:app-avg-luo}, and
\eqref{eq:app-avg-optimal} give, respectively, the averaged lower
bounds quoted in Sec.~\ref{sec:product}, namely
Eqs.~\eqref{eq:avg-ours-sharp}, \eqref{eq:avg-R}, \eqref{eq:avg-S}, \eqref{eq:avg-Luo}, and
\eqref{eq:avg-ours}.

\bibliographystyle{quantum}
\bibliography{ref_UR}

\end{document}